\title{Equilibrium Refinements for Multi-Agent Influence Diagrams: Theory and Practice}
\author{Lewis Hammond}
\affiliation{
  \institution{University of Oxford}
%   \city{Oxford, UK}
  }
\email{lewis.hammond@cs.ox.ac.uk}
\author{James Fox}
\affiliation{
  \institution{University of Oxford}
%   \city{Oxford, UK}
  }
\email{james.fox@cs.ox.ac.uk}
\author{Tom Everitt}
\affiliation{
  \institution{DeepMind}
%   \city{London, UK}
  }
\email{tomeveritt@google.com}
\author{Alessandro Abate}
\affiliation{
  \institution{University of Oxford}
%   \city{Oxford, UK}
  }
\email{aabate@cs.ox.ac.uk}
\author{Michael Wooldridge}
\affiliation{
  \institution{University of Oxford}
%   \city{Oxford, UK}
  }
\email{mjw@cs.ox.ac.uk}
\begin{abstract}
Multi-agent influence diagrams (MAIDs) are a popular form of graphical model that, for certain classes of games, have been shown to offer key complexity and explainability advantages over traditional extensive form game (EFG) representations. In this paper, we extend previous work on MAIDs by introducing the concept of a MAID subgame, as well as subgame perfect and trembling hand perfect equilibrium refinements. We then prove several equivalence results between MAIDs and EFGs. Finally, we describe an open source implementation for reasoning about MAIDs and computing their equilibria.

\end{abstract}
\keywords{multi-agent influence diagrams; equilibrium refinements; extensive form games; probabilistic graphical models}
\newcommand{\BibTeX}{\rm B\kern-.05em{\sc i\kern-.025em b}\kern-.08em\TeX}
\begin{document}

%%% The following commands remove the headers in your paper. For final 
%%% papers, these will be inserted during the pagination process.

\pagestyle{fancy}
\fancyhead{}

%%% The next command prints the information defined in the preamble.

\maketitle 

%%%%%%%%%%%%%%%%%%%%%%%%%%%%%%%%%%%%%%%%%%%%%%%%%%%%%%%%%%%%%%%%%%%%%%%%

\section{Introduction}
\label{sec:intro}
Multi-agent influence diagrams (MAIDs) are a compact and expressive graphical representation for non-cooperative games. Introduced by Koller and Milch (henceforth K\&M) \cite{koller2003multi, milch2008ignorable}, they offer three key advantages over the classic extensive form game (EFG) representation. First, MAIDs can depict many games more compactly than EFGs, especially those with incomplete information. 
Second, MAIDs encode conditional independencies between variables. This means large MAIDs can often be decomposed into several smaller ones, with potentially exponential speedups for finding Nash equilibria \cite{koller2003multi}.
Third, MAIDs often make it possible to explicitly represent aspects of game structure that are obscured in EFGs. While it is possible to convert any EFG to a MAID of at most the same size (Section \ref{sec:EFGtoMAID}), it is true that EFGs are sometimes better suited for modelling asymmetric decision problems. With that said, every model has its weaknesses, and how useful a particular representation is rests on its strengths. 
We further develop both the theory and practical tools for MAIDs in order to allow both researchers and practitioners to make the most of their strengths.

Previous work on MAIDs has focussed on Nash equilibria as the core solution concept \cite{nash1950equilibrium}. Whilst this is arguably the most important solution concept in non-cooperative game theory, if there are many Nash equilibria we often wish to remove some of those that are less `rational'. Many refinements to the Nash equilibrium have been proposed \cite{Maschler2013}, with two of the most important being subgame perfect Nash equilibria \cite{selten1965spieltheoretische} and trembling hand perfect equilibria \cite{selten1974reexamination}. The first rules out `non-credible' threats and the second requires that each player is still playing a best-response when other players make small mistakes. On the practical side, while much software exists for normal or extensive form games, there is no such implementation for reasoning about games expressed as MAIDs, despite their computational advantages.

\subsection{Contribution}

In this paper, we make the following contributions. First, we extend the applicability of MAIDs by introducing the concept of a MAID subgame (Section \ref{sec:subgames}) and build on this concept to introduce subgame perfect and trembling hand perfect equilibrium refinements (Section \ref{sec:eqrefine}). Second, we prove several equivalence results between MAIDs and EFGs, demonstrating the preservation of the key game-theoretic concepts described above when representing EFGs as MAIDs and thus further justifying the use of this model. These proofs are constructive and are based on procedures for converting between EFGs and MAIDs, the full details of which are included in Appendices \ref{sec:macid2efg} and \ref{sec:EFG2MAID}. Third, we report on our open source codebase for computing our equilibrium refinements in MAIDs (Section \ref{sec:implementation}).

\subsection{Related Work}

Our work builds primarily on the seminal work of K\&M \cite{koller2003multi, milch2008ignorable}. 
More recently, casual influence diagrams (CIDs) have been defined \cite{Everitt2021}, where the probabilistic arrows in influence diagrams are interpreted as describing a causal relationship, in accordance with Pearl's graphical causal models \cite{pearl2009causality}. CIDs model single agents, helping to predict behaviour by identifying the incentives that arise due to the agent optimising its objective, and have been shown to have many applications \cite{carey2020incentives,everitt2019modeling, everitt2019reward, holtman2020agi,Langlois2021}. Our equilibrium refinements for MAIDs and our implementation are partially targeted at extending this work on incentives to the multi-agent setting.

Pfeffer and Gal investigated when an agent is motivated to care about its decision in the context of MAIDs, identifying four reasoning patterns (with associated graphical criteria) that justify a particular decision choice \cite{pfeffer2007reasoning}. Later work showed practical applications of these reasoning patterns, which can lead to safer human-machine or machine-machine designs and again reduce the time complexity of computing Nash equilibria \cite{antos2012identifying}. In this work we implement these reasoning patterns in our codebase. Building on this, further research could consider which reasoning patterns arise when agents are playing a certain equilibrium refinement.

Several other formalisms, often partly inspired by MAIDs, have been proposed for representing and reasoning about games as probabilistic graphical models. For example: networks of influence diagrams represent mental models of the different agents as nodes in a graph and use these to describe and reason about belief structures \cite{Gal2008}; settable systems extend structural equation models to include the concept of optimisation and hence the idea of a `best response', which is key to defining game-theoretic equilibria \cite{White2009}; temporal action graph games are similar to MAIDs, but can be more compact for games that involve anonymity or context-specific utility independencies \cite{Jiang2009}. These works, however, focus on the introduction of novel representations, whereas we focus on deepening the theory and practice behind an existing representation. It is an interesting question for further research whether our insights also apply to these related models.

%%%%%%%%%%%%%%%%%%%%%%%%%%%%%%%%%%%%%%%%%%%%%%%%%%%%%%%%%%%%%%%%%%%%%%%%

%%%%%%%%%%%%%%%%%%%%%%%%%%%%%%%%%%%%%%%%%%%%%%%%%%%%%%%%%%%%%%%%%%%%%%%%

\section{Background}

In this section, we define EFGs and MAIDs and show how their graphical representation of games differ with the help of the following example \cite{spence1978job}.

\begin{example}[Job hiring]
\label{ex:hiring}

\textit{A company employs an AI system to automate their hiring process. A naturally hard-working or naturally lazy worker wants a job at this company and believes that a university degree will increase their chance of being hired; however, they also know that they will suffer an opportunity cost from three years of studying. A hard-worker will cope better with a university workload than a lazy worker. The algorithm must decide, on behalf of the company, whether to hire the worker. The company wants to hire someone who is naturally hard-working, but the algorithm can't observe the worker's temperament directly, it can only infer it indirectly through whether or not the worker attended university.}
\end{example}
 
We use capital letters $X$ for variables and let $\dom(X)$ denote the domain of X. An assignment $x \in \dom(X)$ to $X$ is an instantiation of $X$ denoted by $X=x$. $\bm{X} = \{X_1, \dots, X_n\}$ is a set of variables with domain $\dom(\bm{X}) = \times^{n}_{i=1}\dom(X_i)$ and $\bm{x} = \{x_1, \dots, x_n\}$ is the set containing an instantiation of all variables in $\bm{X}$. We let $\Pa_V$ denote the parents of a node $V$ in a graphical representation and $\pa_V$ be the instantiation of $\Pa_V$. $\Ch_V$, $\Anc_V$, $\Desc_V$, and $\Fa_V \coloneqq \Pa_V \cup \{V\}$ are the children, ancestors, descendants, and family of $V$ with, analogously to $\pa_V$, their instantiations written in lowercase. Unless otherwise indicated we index mathematical objects with superscripts $i \in \bm{N}$ to denote their affiliation with a player $i$ (where $\bm{N}$ is a set of players) and with subscripts $j \in \mathbb{N}$ to enumerate them. 

\subsection{Extensive Form Games}
\begin{definition}[\cite{Kuhn1953}]
    An \textbf{extensive-form game (EFG)} $\efg$ is a tuple \\ $(\bm{N}, T, \bm{P}, \bm{D}, \lambda, \bm{I}, U),  \textrm{where:}$
    \begin{itemize}
        \item $\bm{N} = \{1,\dots,n\}$ is a set of agents.

        \item $T = (\bm{V}, \bm{E})$ is a game tree with nodes $\bm{V}$ 
        that are partitioned into the sets $\bm{V}^0, \bm{V}^1, \dots, \bm{V}^n, \bm{L}$ where $R \in \bm{V}$ is the root of $T$, $\bm{L}$ is the set of leaves of the tree, $\bm{V}^0$ is the set of chance nodes, and $\bm{V}^i$ is the set of nodes controlled by player $i \in \bm{N}$. These nodes are connected by edges $\bm{E} \subseteq \bm{V} \times \bm{V}$.
        
        \item $\bm{P} = \{P_1,\dots,P_{\vert\bm{V}^0\vert}\}$ is a set of probability distributions where each $P_j : \Ch_{V_j} \rightarrow [0,1]$ determines the probability of a path through the game tree that has reached chance node $V^0_j$ proceeding to each child node in $\Ch_{V^0_j}$.
        
        \item $\bm{D}$ is a set of decisions, we write $\bm{D}^i_j \subseteq \bm{D}$ to describe the set of available decisions at node $V^i_j \in \bm{V}^i$.
       
        \item $\lambda : \bm{E} \rightarrow \bm{D}$ is a labelling function mapping an edge $(V^i_j, V^k_l)$ to a decision $d \in \bm{D}^i_j$.
    
        \item $\bm{I} = \{\bm{I}^1,\dots,\bm{I}^n\}$ is a set such that for each player $\bm{I}^i \subset 2^{\bm{V}^i}$ defines a partition of the vertices controlled by player $i$ into information sets.
        
        \item $U : \bm{L} \rightarrow \mathbb{R}^n$ is a utility function mapping each leaf node to a vector that determines the final payoff for each player.
    \end{itemize}

An \textbf{information set} $I_j^i \in \bm{I}^i$ is defined such that for all $V^i_k, V^i_l \in I_j^i$ we have $\bm{D}^i_j \coloneqq \bm{D}^i_k = \bm{D}^i_l$. In other words, the same player $i$ selects the decision and the same decisions are available at each of the nodes in an information set. When $\vert I_j^i\vert = 1$ for all $i$ and $j$, $\efg$ is a \textbf{perfect information} game. 
A (behavioural) \textbf{strategy} $\sigma^i$ for a player $i$ is a set of probability distributions $\sigma_j^i : \bm{D}^i_j \rightarrow [0,1]$ over the actions available to the player at each of their information sets $I_j^i$.\footnote{Formally, $\sigma_j^i(d) > 0$ only if $d \in D^i_k$ for any vertex $V_k \in I_j^i$, and $\Sigma_{d \in D^i_k}\sigma_j^i(d) = 1$.} A strategy is \textbf{pure} when $\sigma_j^i(d) \in \{0,1\}$ for all information sets $I_j^i$ and \textbf{fully mixed} when $\sigma_j^i(d) > 0$ for all $d \in \bm{D}^i_j$.  A \textbf{strategy profile} $\sigma = (\sigma^1, \dots, \sigma^n)$ is a tuple of strategies one for each player $i \in \bm{N}$. $\sigma^{-i} = (\sigma^1, \dots, \sigma^{i-1}, \sigma^{i+1},\dots, \sigma^n)$ denotes the partial strategy profile of all players other than $i$, and so $\sigma = (\sigma^i, \sigma^{-i})$. The combination of the distributions in $\bm{P}$ with a strategy profile $\sigma$ thus defines a full probability distribution $P^\sigma$ over paths in $\mathcal{G}$.
\end{definition}

\begin{figure}[h]
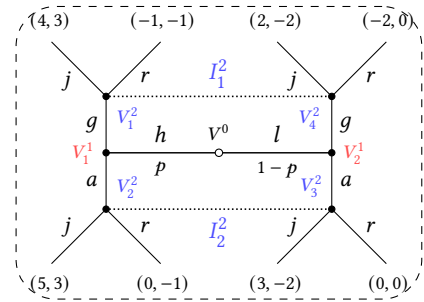

    {\centering
    % \resizebox{0.58\linewidth}{!}{
    \begin{istgame}[scale=0.75]
        \xtdistance{20mm}{20mm}
        \istroot(0)[chance node]{$V^0$}
        \istb<grow=left>{h}[a]
        \istb<grow=right>{l}[a]
        \istb<grow=left>{\text{\footnotesize $p$}}[b]
        \istb<grow=right>{\text{\footnotesize $1-p$}}[b]
        \endist
        \xtdistance{10mm}{20mm}
        \istroot(1)(0-1)<180, red!70>{$V^1_1$}
        \istb<grow=north>{g}[l]
        \istb<grow=south>{a}[l]
        \endist
        \istroot(2)(0-2)<0, red!70>{$V^1_2$}
        \istb<grow=north>{g}[r]
        \istb<grow=south>{a}[r]
        \endist
        \istroot'[north](a1)(1-1)<315, blue!70>{$V^2_1$}
        \istb{j}[bl]{(4,3)}
        \istb{r}[br]{(-1,-1)}
        \endist
        \istroot(b1)(1-2)<45, blue!70>{$V^2_2$}
        \istb{j}[al]{(5,3)}
        \istb{r}[ar]{(0,-1)}
        \endist
        \istroot(a2)(2-2)<100, blue!70>{$V^2_3$}
        \istb{j}[al]{(3,-2)}
        \istb{r}[ar]{(0,0)}
        \endist
        \istroot'[north](b2)(2-1)<225, blue!70>{$V^2_4$}
        \istb{j}[bl]{(2,-2)}
        \istb{r}[br]{(-2,0)}
        \endist
        \xtInfoset(a1)(b2){\textcolor{blue!70}{$I^2_1$}}
        \xtInfoset(b1)(a2){\textcolor{blue!70}{$I^2_2$}}[below]
        \xtSubgameBox(1){(1-1)(1-2)(2-1)(2-2)(a1-1)(a1-2)(b1-1)(b1-2)(a2-1)(a2-2)(b2-1)(b2-2)}[black,inner sep = 12pt]
        \end{istgame}}
        % }
    \caption{An EFG representation of Example \ref{ex:hiring}.}
    \label{fig:signalEFG}
\end{figure}

Figure \ref{fig:signalEFG} shows Example \ref{ex:hiring}'s signalling game in extensive form. Nature, as a chance node $V^0$, flips a biased coin at the root of the tree to decide whether the person is hard-working (probability $p$) or lazy (probability $1-p$). The worker (player 1)'s decision whether to go ($g$) or avoid ($a$) university is represented at nodes $\{V^1_1, V^1_2\} = \bm{V}^1$ and there are four nodes $\{V^2_1, V^2_2, V^2_3, V^2_4\} = \bm{V}^2$ for the hiring algorithm (player 2), each with two decision options: reject ($r$) or job offer ($j$). These nodes are split into two information sets (dotted lines between nodes) because the algorithm does not know whether the person is naturally hard-working. The payoffs for the worker and the employer respectively are given at the leaves of the tree.

\subsection{Multi-Agent Influence Diagrams}
\label{sec:MACID}
Following recent work \cite{howard2005influence, Everitt2021}, we depart slightly from the convention of K\&M to distinguish between an influence \emph{diagram}, which gives the structure of a strategic interaction, and an influence \emph{model}, which adds a particular parametrisation to the diagram.

\begin{definition}[\citep{koller2003multi}]
\label{def:MAID}
    A \textbf{multi-agent influence diagram (MAID)} is a triple $(\bm{N}, \bm{V}, \bm{E})$, where:
    \begin{itemize}
        \item $\bm{N} = \{1,\dots,n\}$ is a set of agents.
        \item $(\bm{V}, \bm{E})$ is a directed acyclic graph (DAG) with a set of vertices $\bm{V}$ connected by directed edges $\bm{E} \subseteq \bm{V} \times \bm{V}$. These vertices are partitioned into $\bm{D}$, $\bm{U}$, and $\bm{X}$, which correspond to decision, utility, and chance nodes respectively. $\bm{D}$ and $\bm{U}$ are in turn partitioned into $\{\bm{D}^i\}_{i\in \bm{N}}$ and $\{\bm{U}^i\}_{i\in \bm{N}}$ corresponding to their association with a particular agent $i \in \bm{N}$.
    \end{itemize}
\end{definition}

\begin{definition}
    \label{def:MAIM}
    A \textbf{multi-agent influence model (MAIM)} is a tuple $(\bm{N}, \bm{V}, \bm{E}, \theta)$ where $(\bm{N}, \bm{V}, \bm{E})$ is a MAID and:
    \begin{itemize}
        \item $\theta \in \Theta$ is a particular parametrisation over the nodes in the graph specifying a finite domain $\dom(V)$ for each node $V \in \bm{V}$, real-valued domains $\dom(U) \subset \mathbb{R}$ for each $U \in \bm{U}$, and a set of conditional probability distributions (CPDs) $\Pr(\bm{V} \mid \Pa_V)$ for every chance and utility node. Taken together, the CPDs form a partial distribution $\Pr(\bm{X},\bm{U} : \bm{D}) = \prod_{V \in \bm{V} \setminus \bm{D}} \Pr(V \mid \Pa_V)$ over the variables in the MAID.
        \item The value $u \in \dom(U)$ of a utility node is a deterministic function of the values of its parents $\pa_U \in \dom(\Pa_U)$. 
    \end{itemize}
\end{definition}

Figure \ref{fig:signalMACID} a) shows the MAID for Example \ref{ex:hiring} corresponding to the EFG in Figure \ref{fig:signalEFG}. Whether the worker is hard-working or lazy is decided by nature's chance node $X$ (white circle). The worker's decision $D^1$ and utility $U^1$ nodes are depicted as a red rectangle and diamond respectively. The algorithm's decision $D^2$ and utility $U^2$ nodes are in blue. To instantiate a MAIM, CPD tables for $U^1$ and $U^2$ would be consistent with the payoffs and value of $p$ in Figure \ref{fig:signalEFG}. 

There are two types of directed edge in a MAID. Full edges leading into $\bm{X} \cup \bm{U}$ represent probabilistic dependence, as in a Bayesian network.
Dotted edges leading into $\bm{D}$ represent information that is available to the agent at the time a decision is made (e.g. the edge $X \rightarrow D^1$). In this way, the values of the parents $\pa_D$ of a decision node $D$ represent the decision context for $D$.
The CPDs of decision nodes are not defined when a MAIM is constructed because they are instead chosen by the agents playing the game. In general, a player's decision CPD need not be optimal.

\begin{figure}[ht]
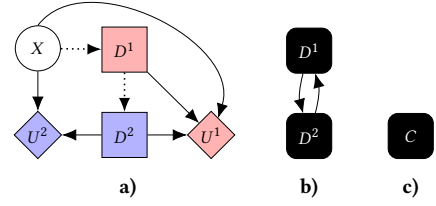

    \centering
    \begin{subfigure}[b]{0.4\linewidth}
        \centering
        \resizebox{0.75\width}{!}{
        \begin{influence-diagram}
        \node (X) [] {$X$};
        \node (U2) [utility, below = of X, player2] {$U^2$};
        \node (D2) [decision, right = of U2, player2] {$D^2$};
        \node (U1) [utility, right = of D2, player1] {$U^1$};
        \node (D1) [decision, above = of D2, player1] {$D^1$};
        \edge [information] {X} {D1};
        \edge [information] {D1} {D2};
        \path (X.north) edge[->, bend left=80] (U1);
        \edge {X} {U2};
        \edge {D1} {U1};
        \edge {D2} {U1, U2};
        \end{influence-diagram}
        }
        \caption*{a)}
    \end{subfigure}
    \begin{subfigure}[b]{0.15\linewidth}
        \centering
        \resizebox{0.75\width}{!}{
        \begin{influence-diagram}
        \node (U) [relevanceb] {$D^1$};
        \node (H) [below = of U, relevanceb] {$D^2$};
        \path (U) edge[->, bend right=15] (H);
        \path (H) edge[->, bend right=15] (U);
        \end{influence-diagram}
        }
        \caption*{b)}
    \end{subfigure}
    \begin{subfigure}[b]{0.15\linewidth}
        \centering
        \resizebox{0.75\width}{!}{
        \begin{influence-diagram}
        \node (U) [relevanceb] {$C$};
        \end{influence-diagram}
        }
        \caption*{c)}
    \end{subfigure}
 \caption{ A MAID $\macid$  a) representation of Example \ref{ex:hiring}, along with its cyclic relevance graph $Rel(\macid)$ b) (Section \ref{sec:stratrel}) and condensed relevance graph $ConRel(\macid)$ c) (Section \ref{sec:subgames}).}
    \label{fig:signalMACID}
\end{figure}

This example demonstrates two clear advantages of MAIDs compared with EFGs. First, in many real world cases, MAIDs make it possible to explicitly represent aspects of game structure that are obscured in the extensive form. For example, in the EFG, information sets were drawn to reflect the fact that the algorithm does not know whether the worker is naturally hard-working or lazy when it selects its action. However, in the corresponding MAID, this incomplete information is represented simply by the fact that there is no edge $X \rightarrow D^2$. Moreover, the company's utility $U^2$ isn't a function of whether the applicant went to university or not -- it only cares whether the applicant is hard-working and whether or not they hired them. We can infer this from the EFG payoffs in Figure \ref{fig:signalEFG}, but in the MAID this is shown instantly by the fact that there is no edge $D^1 \rightarrow U^2$.  

Second, MAIDs can provide a more compact graphical representation of games \cite{koller2003multi, milch2008ignorable}. In fact, the MAID representation of a game need never be bigger than the corresponding EFG and can be smaller in many cases. For example, there are four nodes $V^2_1, V^2_2, V^2_3, V^2_4$ in Figure \ref{fig:signalEFG} which correspond to the company's decision. In the MAID, these are combined into one node, $D^2$.

A further strength of MAIMs derive from them being probabilistic graphical models, and so probabilistic dependencies between chance and strategic variables can be exploited. We recall the notion of \textit{d-separation}, a graphical criterion for determining independence properties of the probability distribution associated with the graph. This is necessary for the concept of r-reachability (Section \ref{sec:stratrel}) and consequently that of a MAID subgame (Section \ref{sec:subgames}).

\begin{definition}[\cite{pearl2009causality}]
\label{def:dsep}
    A path $p$ in a MAID $(\bm{N}, \bm{V}, \bm{E})$ is said to be \textbf{d-separated} by a set of nodes $\bm{W} \subset \bm{V}$ if and only if either:
    \begin{itemize}
        \item $p$ contains a chain $X \rightarrow Y \rightarrow Z$ or a fork $X \leftarrow Y \rightarrow Z$ and $Y \in \bm{W}$.
        \item $p$ contains a collider $X \rightarrow Y \leftarrow Z$ and $(\{Y\} \cup \Desc_Y) \not\subseteq \bm{W}$.
    \end{itemize}
\end{definition}

A set $\bm{W}$ \textbf{d-separates} $\bm{X}$ from $\bm{Y}$, denoted $\bm{X} \perp \bm{Y} \mid \bm{W}$, if and only if $\bm{W}$ d-separates every path from a node in $\bm{X}$ to a node in $\bm{Y}$.
Sets of variables that are not d-separated are said to be \textbf{d-connected}, denoted $\bm{X} \not\perp \bm{Y} \mid \bm{W}$. 
If $\bm{X}$ and $\bm{Y}$ are d-separated conditioning on $\bm{W}$, then $\bm{X}$ and $\bm{Y}$ are probabilistically independent in the sense that $P(\bm{X}\mid \bm{Y}, \bm{W}) = P(\bm{X}\mid \bm{W})$.

For example, there are several paths from $U^2$ to $U^1$ in Figure \ref{fig:signalMACID} a): direct forks through $X$ or $D^2$, a fork through $X$ and then a forward chain through $D^1$, or a backward chain through $D^2$ and then a fork through $D^1$. If $\bm{W} = \varnothing$ then $U^2$ is d-connected to $U^1$ ($U^2 \not\perp U^1\mid\varnothing$), but if $\bm{W} = \{X, D^2\}$ then all of the paths have been d-separated by conditioning on $\bm{W}$ and so $U^2 \perp U^1\mid\bm{W}$.

\subsection{Policies}
\label{sec:policies}
An agent makes a decision depending on the information it observes prior to making that decision. Therefore, in a MAIM, a \textbf{decision rule} $\pi_D$ for a decision node $D$ is a CPD $\pi_D(D\mid\Pa_D)$. A \textbf{partial policy profile} $\pi_{\bm{A}}$ is an assignment of decision rules $\pi_D(D\mid\Pa_D)$ to some subset $\bm{A} \subset \bm{D}$ and $\pi_{-\bm{A}}$ is the set of decision rules for all $D \in \bm{D} \setminus \bm{A}$. For example, $\pi_{D}$ refers to a decision rule at decision node $D$ and so $\pi_{-D} = \prod_{D' \in \bm{D} \setminus \{D\}} \pi_{D'}(D'\mid\Pa_{D'})$ denotes the partial policy profile over all of the MAIM's other decision nodes $\bm{D} \setminus \{D\}$. We refer to $\pi_{\bm{D}^i}$, which describes all the decision choices made by agent $i \in \bm{N}$, as that agent's \textbf{policy}, $\pi^i$, and we write $\pi^{-i} = (\pi^1, \dots, \pi^{i-1}, \pi^{i+1}, \dots, \pi^n)$ to denote the set of policies made by all agents other than agent $i$. A \textbf{policy profile} $\pi$ assigns a policy to every agent $\pi = (\pi^1, \dots, \pi^n)$; it describes all the decisions made by every agent in the MAIM. We denote spaces of policy profiles by $\Pi$ (e.g.\ $\Pi_{\bm{A}}$, $\Pi^i$, and $\Pi$).

If for every $\pa_d \in \dom(\Pa_D)$ and $d \in \dom(D)$ we have $\pi_D(d\mid\pa_D) \in \{0,1\}$, the decision rule is said to be \textbf{pure} or \textbf{deterministic}. Otherwise, the decision rule is said to be \textbf{mixed} and it is \textbf{fully mixed} if, for every $\pa_D$ and every $d$, we have $\pi_D(d\mid\pa_D) > 0$. Pure, mixed, and fully mixed policies or policy profiles are defined analogously. 

When a partial policy profile $\pi_{\bm{A}}$ is applied to a MAIM $\macid$, a new MAIM $\macid(\pi_{\bm{A}})$ is obtained in which each decision node $D \in \bm{A}$ becomes a chance node with a CPD equal to $\pi_D$. In the case of a policy profile, all decision nodes are turned into chance nodes, and so the induced MAIM $\mathcal{M(\pi)}$ is now a Bayesian network (utility nodes are interpreted as chance nodes when a MAIM is viewed as a Bayesian network). This defines the joint probability distribution $\Pr^\pi$ over all variables in $\mathcal{M}$ and may be used for probabilistic inference.

\subsection{Utilities} \label{sec:reward}

In an EFG $\efg$ the expected utility for each player depends on the set of probability distributions $\bm{P}$ and strategy profile $\sigma$ which give a full probability distribution ${P}^\sigma$ over the paths in $\efg$. For each path $\rho$ beginning from the root $R$ of $\efg$'s tree and terminating in a unique leaf node $\rho[\bm{L}]$, player $i$ receives utility $U(\rho[\bm{L}])[i]$ -- the $i$\textsuperscript{th} entry in the corresponding payoff vector. By playing strategy profile $\sigma$, player $i$'s expected utility $\mathcal{U}^i_{\efg}(\sigma) \coloneqq \sum_\rho P^\sigma(\rho) U(\rho[\bm{L}])[i]$.

Similarly, the joint distribution $\Pr^\pi$ induced by the policy profile $\pi$ in a MAIM $\macid$ allows us to define the expected utility for each player under this policy profile. Agent $i$'s expected utility from policy profile $\pi$ is the sum of the expected value of utility nodes $\bm{U}^i$ given by
$\mathcal{U}^i_{\mathcal{M}}(\pi) 
 \coloneqq \sum_{U_j \in \bm{U}^i}\sum_{u_j \in \dom(U_j)} \!\! u_j \Pr^\pi(U_j = u_j)$.
 We assume that each agent's goal is to select a policy $\pi^i$ that maximises its expected utility. Therefore, we can now define what it means for an agent to optimise $\pi_{\bm{A}}$ for a set of decisions $\bm{A} \subseteq \bm{D}^i$, given a partial policy profile $\pi_{-\bm{A}}$ over all of the other decision nodes in $\macid$. We write $\mathcal{U}^i_{\mathcal{M}}(\pi_{\bm{A}}, \pi_{\bm{-A}})$ to denote the expected utility for player $i$ under the policy profile $\pi = (\pi_{\bm{A}}, \pi_{\bm{-A}})$.

\begin{definition} Let $\bm{A} \subseteq \bm{D}^i$. Player $i$'s partial policy $\pi_{\bm{A}}$ is \textbf{optimal} for a policy profile $\pi=(\pi_{\bm{A}}, \pi_{-\bm{A}})$ if $ \mathcal{U}^i_\mathcal{M}(\pi_{\bm{A}}, \pi_{-\bm{A}}) \geq 
    \mathcal{U}^i_\mathcal{M}(\hat{\pi}_{\bm{A}}, \pi_{-\bm{A}})$ for all $\hat{\pi}_{\bm{A}} \in \Pi_{\bm{A}}$.
Player $i$'s policy $\pi^{i}$ is a \textbf{best response} to the partial policy profile $\pi^{-i}$ assigning policies to the other agents if $\mathcal{U}^i_\mathcal{M}(\pi^{i}, \pi^{-i}) \geq 
    \mathcal{U}^i_\mathcal{M}(\hat{\pi}^i, \pi^{-i})$ for all $\hat{\pi}^i \in \Pi^i$.
\label{def:bestresponse}
\end{definition}

\subsection{Strategic and Probabilistic Relevance}
\label{sec:stratrel}

 To optimise a particular decision rule, we often want to know which other decision rules need to already be known. This is captured by K$\&$M's concept of \textit{strategic relevance}.

\begin{definition}[\cite{koller2003multi}]
    \label{def:stratrel}
    Let $D_k, D_l \in \bm{D}$ be decision nodes in a MAIM $\mathcal{M}$. $D_l$ is \textbf{strategically relevant} to $D_k$ ($D_k$ strategically relies on $D_l$) if there exist two policy profiles $\pi$ and $\pi'$ and a decision rule $\pi_{D_k}$, such that:
    \begin{itemize}
        \item $\pi_{D_k}$ is optimal for $\pi$.
        \item $\pi$ differs from $\pi'$ only at $D_l$.  
        \item $\pi_{D_k}$ is not optimal for $\pi'$, and neither is any decision rule $\hat{\pi}_{D_k}$ that agrees with $\pi_{D_k}$ for all instantiations $\pa_{D_k}$ of $D_k$'s parents where the joint probability $\Pr^{\pi'}(\pa_{D_k}) > 0$.
    \end{itemize}
\end{definition}

The first two conditions say that if decision rule $\pi_{D_k}$ is optimal for a policy profile $\pi$, and $D_k$ does not strategically rely on $D_l$, then $\pi_{D_k}$ must also be optimal for any policy profile $\pi'$ that differs from $\pi$ only at $D_l$. The third condition deals with sub-optimal decisions in response to zero-probability decision contexts. 

A related question is \textit{probabilistic relevance}, which considers whether the probability distribution of a chance or utility node $X$ can influence the optimal policy.

% \begin{definition}
%     Given a MAID $\macid = (\bm{N}, \bm{V}, \bm{E})$ and an arbitrary policy profile $\pi$, a chance or utility node $Z \in \bm{X} \cup \bm{U}$ is \textbf{probabilistically relevant} for $D\in\bm{D}^i$ under a policy profile $\pi$ if the query 
%     $\Pr^\pi(\bm{U}^i \cap \Desc_{D} \mid \Fa_{D})$ 
%     depends on the CPD assigned to $Z$.
% \end{definition}

\begin{definition}
    Let $D$ be a decision node in a MAID $\macid$. 
    A chance or utility node $Z \in \bm{X} \cup \bm{U}$ is \textbf{probabilistically relevant} to $D$ if the set of optimal decision rules for $D$ varies with the CPDs assigned to $Z$ under some joint policy profile $\pi$.
\end{definition}

We generalise K\&M's graphical criterion, \textbf{s-reachability}, as \textbf{r-reachability} to determine both strategic relevance and probabilistic relevance.
Essentially, the criterion assesses whether knowing the CPD or decision rule of a node $V$ can have positive value of information \cite{Everitt2021}.
The criterion is sound (if $V$ is relevant to $D$, then $V$ is r-reachable from $D$) and complete (if $V$ is r-reachable from $D$ then there is some parametrisation $\theta$ of the MAID and some policy profile $\pi$ such that $V$ is relevant to $D$). One can then further use r-reachability to define a \textit{relevance graph} over $\bm{D}$.

\begin{definition} 
\label{def:rreachable}
    A node $V$ is \textbf{r-reachable} from a decision $D \in \bm{D}^i$ in a MAID, $\macid = (\bm{N}, \bm{V}, \bm{E})$, if a newly added parent $\hat V$ of $V$ satisfies $\hat V \not\perp \bm{U}^i \cap \Desc_{D} \mid \Fa_{D}$. 
\end{definition}

\begin{definition}\label{def:relgraph}
    The directed \textbf{relevance graph} for $\mathcal{M}$, denoted by $Rel(\mathcal{M}) = (\bm{D}, \bm{E}_{Rel})$, is a graph where $\bm{D}$ is the set of $\mathcal{M}$'s decision nodes connected by directed edges $\bm{E}_{Rel} \subseteq \bm{D} \times \bm{D}$. There is a directed edge from $D_j \rightarrow D_k$ if and only if $D_k$ is r-reachable from $D_j$.\footnote{The edge directions used here are the same as originally defined by K\&M \cite{koller2001multi} but reversed compared with those in their later work \cite{koller2003multi} as this eases our later exposition of MAID subgames (Section \ref{sec:subgames}).}
\end{definition}

Relevance graphs show which other decisions each decision depends on. The relevance graph for Example \ref{ex:hiring}'s MAIM in Figure \ref{fig:signalMACID} b) is cyclic because each decision node strategically relies on the other. The worker would be better off knowing the company's hiring policy before deciding whether or not to go to university, but the algorithm would also be better off knowing the worker's policy because it doesn't know the worker's temperament (lazy or hard-working). Our second example provides a case of acyclic strategic relevance.

\begin{example}[Taxi competition]
\label{ex:taxis}
\textit{Two autonomous taxis, operated by different companies, are driving along a road with two hotels located next to one another -- one expensive and one cheap. Each taxi must decide (one first, then the other) which hotel to stop in front of, knowing that it will likely receive a higher tip from guests of the expensive hotel. However, if both taxis choose the same location, this will reduce each taxi's chance of being chosen by that hotel's guests.}
\end{example}

\begin{figure}[ht!]
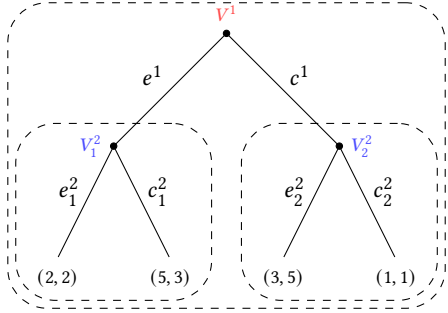

    \centering
    \begin{istgame}[scale=0.75]
    \xtdistance{20mm}{20mm}
    \istroot(0)<90,red!70>{$V^1$}+20mm..40mm+
    \istb{e^1}[al]
    \istb{c^1}[ar] \endist
    %\setistmathTF001 % mode: text,text,math
    \istroot(1)(0-1)<180, blue!70>{$V^2_1$}
    \istb{e^2_1}[al]{(2,2)}
    \istb{c^2_1}[ar]{(5,3)} \endist
    %\setistmathTF100 % mode: math,text,text
    \istroot(2)(0-2)<0, blue!70>{$V^2_2$}
    \istb{e^2_2}[al]{(3,5)}
    \istb{c^2_2}[ar]{(1,1)} \endist
    \xtSubgameBox(1){(1-1)(1-2)}[black,inner sep = 15pt]
    \xtSubgameBox(2){(2-1)(2-2)}[black,inner sep = 15pt]
    \xtSubgameBox(0){(1-1)(1-2)(2-1)(2-2)}[black,inner sep = 18pt]
    \end{istgame}
    \caption{An EFG representation of Example \ref{ex:taxis}, with EFG subgames enclosed in dashed boxes.}
    \label{fig:taxiEFG}
\end{figure}

Because the second taxi can observe which hotel the first taxi chooses to park in front of, it doesn't need to know the first taxi's policy in order to optimise its own; the second taxi's decision ($D^2$) does \emph{not} strategically rely on the first taxi's decision ($D^1$). However, the first taxi would be better off knowing the second taxi's policy before deciding its own. For example, with the parametrisation in Figure \ref{fig:taxiMACID}  e), if the first taxi knows that the second taxi's policy is to always park in front of the expensive hotel, the first taxi ought to always park in front of the cheaper hotel. $D^1$ \emph{does} strategically rely on $D^2$. In Section \ref{sec:implementation}, we shall see that it is easier to compute equilibria in MAIMs with acyclic relevance graphs.

\begin{figure}[h]
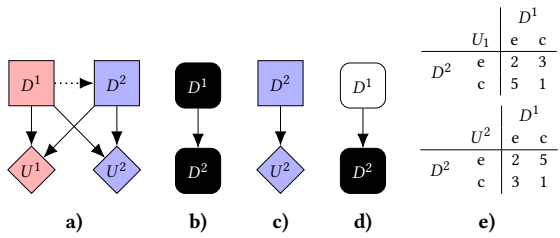

    \centering
    \begin{subfigure}[b]{0.25\linewidth}
        \centering
        \resizebox{0.75\width}{!}{
        \begin{influence-diagram}
            \node (D1) [decision, player1] {$D^1$};
            \node (D2) [decision, right = of D1, player2] {$D^2$};
            \node (U1) [utility, below = of D1, player1] {$U^1$};
            \node (U2) [utility, below = of D2, player2] {$U^2$};
            \edge [information] {D1} {D2};
            \edge {D1, D2} {U1};
            \edge {D1, D2} {U2};
        \end{influence-diagram}
        }
        \caption*{a)}
    \end{subfigure}
    \begin{subfigure}[b]{0.12\linewidth}
        \centering
        \resizebox{0.75\width}{!} {
        \begin{influence-diagram}
            \node (D1) [right = 2 of U, relevanceb] {$D^1$};
            \node (D2) [below = of D1, relevanceb] {$D^2$};
            \edge {D1} {D2};
        \end{influence-diagram}
        }
        \caption*{b)}
    \end{subfigure}
    \begin{subfigure}[b]{0.12\linewidth}
        \centering
        \resizebox{0.75\width}{!}{
        \begin{influence-diagram}
            \node (D2) [decision, player2] {$D^2$};
            \node (U2) [utility, below = of D2, player2] {$U^2$};
            \edge {D2} {U2};
        \end{influence-diagram}
        }
        \caption*{c)}
    \end{subfigure}
    \begin{subfigure}[b]{0.12\linewidth}
        \centering
        \resizebox{0.75\width}{!}{
        \begin{influence-diagram}
            \node (D1) [right = 2 of U, relevancew] {$D^1$};
            \node (D2) [below = of D1, relevanceb] {$D^2$};
            \edge {D1} {D2};
        \end{influence-diagram}
        }
        \caption*{d)}
    \end{subfigure}
    \begin{subfigure}[b]{0.25\linewidth}
        \centering
        \resizebox{0.75\width}{!}{
            \begin{tabular}{cc|cc}
            &    & \multicolumn{2}{c}{$D^1$}   \\
            & $U_1$ & e        & c \\ \hline
    \multirow{2}{*}{$D^2$} & e  & 2        & 3 \\
            & c  & 5        & 1 \\
            \end{tabular}}
            \\
            \vspace{0.1cm}
            \resizebox{0.75\width}{!}{
            \begin{tabular}{cc|cc}
            &    & \multicolumn{2}{c}{$D^1$}   \\
            & $U^2$ & e        & c \\ \hline
    \multirow{2}{*}{$D^2$} & e  & 2        & 5 \\
            & c  & 3       & 1 \\
            \end{tabular}}
            \caption*{e)}
\end{subfigure}
\caption{
A MAID a) and corresponding relevance graph b) for Example \ref{ex:taxis}, alongside the only proper MAID subgame c) highlighted in the (condensed) relevance graph d). The utility nodes' parametrisation is in e).}
\label{fig:taxiMACID}
\end{figure}

\section{Game Theory for MAIDs}

In this section, we present novel material. We begin by defining MAID and MAIM subgames. These set up our discussion of several equilibrium refinements in MAIMs. Finally, we present and prove a number of equivalence results between EFGs and MAIMs.

\subsection{Subgames}
\label{sec:subgames}

In an EFG, a subtree of the original game tree is an \textbf{EFG subgame} if it is closed under information sets and descendants. Figure \ref{fig:taxiEFG} shows all the EFG subgames (dashed boxes) for the game described by Example \ref{ex:taxis}. Any game tree is an EFG subgame of itself, and so an EFG subgame on a strictly smaller set of nodes is called a \textbf{proper} EFG subgame. We propose an analoguous definition for MAIDs. Just like for EFGs, MAIM subgames are parts of the game that can be solved independently.

\begin{definition}
    A \textbf{subgame base} for a MAID $(\bm{N}, \bm{V}, \bm{E})$ is a subset $\bm{V'}\subseteq \bm{V}$ such that:
    \begin{itemize}
        \item For any $X,Y\in \bm{V}'$ and any directed path $X \to \cdots \to Y$ in $\macid$, all nodes on the path are also in $\bm{V'}$.
        \item $\bm{V}'$ is closed under r-reachability, i.e.\ if a node $Z$ is r-reachable from a decision $D\in \bm{V'}$, then $Z$ is also in $\bm{V'}$.
    \end{itemize}
\end{definition}

\begin{definition}
\label{def:MACIDsubgame}
    Let $\macid = (\bm{N}, \bm{V}, \bm{E})$ be a MAID,
    and let $\bm{V'}\subseteq \bm{V}$ be a subgame base.
    The \textbf{MAID subgame} corresponding to $\bm{V'}$, is a new MAID $\macid'=(\bm{N'}, \bm{V'}, \bm{E}')$  where:
    \begin{itemize}
        \item $\bm{N'} = \{i \in \bm{N} \mid \bm{D}^i \cap \bm{V'} \neq \varnothing \}$, the players restricted to $\bm{V'}$.
        \item $\bm{V'}$ is partitioned into $\bm{D'} = \bm{D}\cap \bm{V'}$, $\bm{U'} = \bm{U} \cap \Desc_{\bm{D'}}$, and $\bm{X'} = \bm{V'} \setminus (\bm{D'} \cap \bm{U'})$.
        \item $\bm{E}'$ is the subset of edges in $\bm{E}$ that connect two nodes in $\bm{V'}$.
    \end{itemize}
    Analogously, the \textbf{MAIM subgame}
    of a MAIM $(\bm{N}, \bm{V}, \bm{E}, \theta)$ corresponding to a subset $\bm{V'}\subseteq \bm{V}$ and an instantiation $\bm{y}$ of the nodes $\bm{Y} = \bm{V}\setminus \bm{V'}$,
    is the modified MAIM $(\bm{N'}, \bm{V'}, \bm{E}', \theta')$ where:
    \begin{itemize}
        \item $(\bm{N'}, \bm{V'}, \bm{E}')$ is the MAID subgame corresponding to $\bm{V'}$.
        \item $\theta'$ is like $\theta$, restricted to nodes in $\bm{V'}$.
        If a node $X\in \bm{V'}$ has some parents outside of $\bm{V'}$ (i.e.\ in $\bm{Y}$) then $\Pr'(X\mid \pa'_X) = \Pr(X\mid \pa'_X, \bm{y}')$, where $\Pa'_X=\Pa_X\cap \bm{V'}$, $\bm{Y}' = \Pa_X\cap \bm{Y}$, $\Pr$ is the CPD of $X$ in $\theta$, and $\Pr'$ becomes the CPD of $X$ in $\theta'$.
    \end{itemize}
    In fact, only the setting $\bm{y}$ of the nodes that have a child in $\bm{V'}$ will matter.
    A MAIM subgame is \textbf{feasible} if
    there exists a policy profile $\pi$ where $\Pr^\pi(\bm{y}) > 0$.
\end{definition}

In a sequential game with perfect information, the MAIM subgames will be in one-to-one correspondence with the subgames in any corresponding EFG. For example, Figures \ref{fig:taxiEFG} and \ref{fig:taxiMACID} show the EFG and MAID subgames of Example \ref{ex:taxis}. As with EFG subgames, a MAID is trivially a MAID subgame of itself, as in Figure \ref{fig:taxiMACID} a). Figure \ref{fig:taxiMACID} c) shows the only \textit{proper} MAID subgame of $\macid$. 
Two MAIM subgames are associated with this MAIM subgame: one for each value of $D^1$.
The additional independencies represented by a MAID sometimes yields more independently solvable components than identifiable in an EFG representation; i.e., there can be more subgames in a MAIM than in a corresponding EFG (Appendix \ref{sec:extra_subgames}).

A better sense of MAID subgames can be gained from looking at the strongly connected components (SCC) of the relevance graph, where recall that an SCC is a subgraph containing a directed path between every pair of nodes. A maximal SCC is an SCC that is not a strict subset of any other SCC. We can use this fact to define a condensed relevance graph, called the component graph by K$\&$M, which aggregates each SCC into a single node.

\begin{definition} \label{def:ConRel} For a given MAID $\mathcal{M} = (\bm{N}, \bm{V}, \bm{E})$, let $\bm{C}$ be the set of maximal SCCs of its relevance graph $Rel(\macid)$.
The \textbf{condensed relevance graph} of $\mathcal{M}$ is the directed graph $ConRel(\macid) = (\bm{C}, \bm{E}_{ConRel})$. There is an edge $\bm{C}_m \rightarrow \bm{C}_l$ between $\bm{C}_m, \bm{C}_l\in\bm{C}$ if and only if there exists $C_m\in \bm{C}_m$ and $C_l\in \bm{C}_l$ with and edge $C_m\to C_l$ in $Rel(\macid)$.
\end{definition}

Subgraphs of $ConRel(\macid)$ closed under descendants induce MAID subgames.
Figures \ref{fig:taxiMACID} b) and d) highlight the nodes of the respective MAID subgames (since the relevance graph is acyclic here, condensing it has no effect).
As the condenstion of a directed graph is always acyclic \cite[page 617]{cormen2009introduction}, games can always be solved via backwards induction over the condensed relevance graph \citep{koller2003multi}.

\subsection{Equilibrium Refinements}
\label{sec:eqrefine}

MAIDs represent dynamic games of incomplete information -- those in which at least one player $i$ does not have perfect information about the chance variables $\bm{X}$ (commonly interpreted as not knowing the \emph{type} $T^i$ of the other agents, where $T^i$ defines the payoffs for agent $i$) -- and thus admit discussion of the \emph{beliefs} that agents possess. In this work, we implicitly view such beliefs as defined by the induced distribution $\Pr^\pi$ and so eschew further discussion of them here; in a sense, chance variables can be viewed as decisions by nature (using fixed stochastic policies). 

In non-cooperative games, the most fundamental solution is a Nash equilibrium \cite{nash1950equilibrium}, a policy profile such that no agent has an incentive to unilaterally deviate. In other words, every player is simultaneously playing a best-response against all other players.

\begin{definition}[\cite{koller2003multi}] 
    \label{def:NE}
    A full policy profile $\pi$ is a \textbf{Nash equilibrium (NE)} in a MAIM $\mathcal{M}$ if, for every player $i \in \bm{N}$, $\mathcal{U}^i_\mathcal{M}(\pi^i, \pi^{-i}) \geq \mathcal{U}^i_\mathcal{M}(\hat{\pi}^i, \pi^{-i})$ for all $\hat{\pi}^i \in \Pi^i$.
\end{definition}

The concept of a subgame perfect equilibrium (SPE) was introduced by Reinhard Selten to address the issue that EFGs admit NEs with \textit{non-credible threats} -- equilibria in which a player threatens to take some action that, if the player is rational, they would never actually carry out \cite{selten1965spieltheoretische, selten1974reexamination}. In an EFG, a strategy profile is an SPE if it induces an NE in every EFG subgame; this eliminates all NEs containing non-credible threats. Our definition of MAID subgames above allows us to introduce an analogous equilibrium concept.

\begin{definition}
    A full policy profile $\pi$ is a \textbf{subgame perfect equilibrium (SPE)} in a MAIM $\macid$ if $\pi$ is an NE in every MAIM subgame of $\macid$.
\end{definition}

\begin{figure}[h]
    \centering
    \begin{subfigure}[c]{0.3\linewidth}
        % \vspace{0pt}
        \centering
        \resizebox{0.75\width}{!}{
            \begin{tabular}{c | c}
                $D^1$ & \\
                \hline
                $e$ & 1 \\
                $c$ & 0 \\
            \end{tabular}}
            \\
            \vspace{0.1cm}
            \resizebox{0.75\width}{!}{
            \begin{tabular}{cc|cc}
            &    & \multicolumn{2}{c}{$D^1$}   \\
            &    & e        & c \\ \hline
    \multirow{2}{*}{$D^2$} & e  & 0        & 1 \\
            & c  & 1        & 0
            \end{tabular}}
            \caption*{a)}
    \end{subfigure}
    \begin{subfigure}[c]{0.3\linewidth}
        % \vspace{0pt}
        \centering
        \resizebox{0.75\width}{!}{
            \begin{tabular}{c | c}
                $D^1$ & \\
                \hline
                $e$ & 0 \\
                $c$ & 1 \\
            \end{tabular}}
            \\
            \vspace{0.1cm}
            \resizebox{0.75\width}{!}{
            \begin{tabular}{cc|cc}
            &    & \multicolumn{2}{c}{$D^1$}   \\
            &    & e        & c \\ \hline
    \multirow{2}{*}{$D^2$} & e  & 1        & 1 \\
            & c  & 0        & 0
            \end{tabular}}
            \caption*{b)}
    \end{subfigure}
    \begin{subfigure}[c]{0.3\linewidth}
        % \vspace{0pt}
        \centering
        \resizebox{0.75\width}{!}{
            \begin{tabular}{c | c}
                $D^1$ & \\
                \hline
                $e$ & 1 \\
                $c$ & 0 \\
            \end{tabular}}
            \\
            \vspace{0.1cm}
            \resizebox{0.75\width}{!}{
            \begin{tabular}{cc|cc}
            &    & \multicolumn{2}{c}{$D^1$}   \\
            &    & e        & c \\ \hline
    \multirow{2}{*}{$D^2$} & e  & 0        &  0\\
            & c  & 1        & 1
            \end{tabular}}
            \caption*{c)}
    \end{subfigure}
    \caption{The policies for Example \ref{ex:taxis}'s three pure NEs. Only a) is an SPE.}
    \label{fig:taxiNE}
\end{figure}

Figure \ref{fig:taxiNE} shows the three pure NEs of Example \ref{ex:taxis}.
The policy profiles in b) and c) are NEs but not SPEs. To see why, consider the proper MAIM subgame when $D^1 = e$ and policy profile b). 
Here player 2 obtains utility 3 if they choose $c$ and utility 2 if they choose $e$. Therefore, player 2 is making a \textit{non-credible} threat whenever  
$\pi^2(D^2=e \mid D^1=e) > 0$.
For similar reasons, policy profile c) is also not SPE.
Therefore a) is the only SPE of this MAIM.

Within maximal SCCs of $Rel(\macid)$, in which there are no proper subgames, the agents choose decision rules interdependently. This can lead to arbitrarily bad decision rules in decision contexts that occur with probability zero. 
Trembling hand equilibria offer a useful NE refinement in these situations \cite{selten1974reexamination}. Intuitively, they require that each player's policy is still a best response when the other players make mistakes, or `tremble', with small probability.
Let $\delta_k$ be a perturbation vector containing, for every $D \in \bm{D}$, $d \in \dom(D)$, and decision context $\pa_D$, a value $\epsilon^d_{\pa_D} \in (0,1)$ such that $\sum_{d \in \dom(D)} \epsilon^d_{\pa_D} \leq 1$. Then, given a MAIM $\macid$, the perturbed MAIM $\macid(\delta_k)$ is defined such that for every $d \in \dom(D)$ for $D \in \bm{D}^i$, agent $i$ must play $d$ with probability at least $\epsilon^d_{\pa_D}$ given $\pa_D$.

\begin{definition}
    A full policy profile $\pi$ is a \textbf{trembling hand perfect equilibrium (THPE)} in a MAIM $\macid$ if there is a sequence of perturbation vectors $\{\delta_k\}_{k\in\mathbb{N}}$ such that $\lim_{k \rightarrow \infty}\vert\delta_k\vert_\infty = 0$ and for each perturbed MAIM $\macid(\delta_k)$ there is an NE $\pi_k$ such that $\lim_{k \rightarrow \infty} \pi_k = \pi$.
\end{definition}

\begin{example}[Cyber-war] 
\label{ex:cyber}
\textit{The security agencies for two governments both use an algorithm to manage their cyber-defence. Their algorithm decides whether to cyber-attack the other nation's security agency. If both agencies attack one another, both suffer some damage (mainly the opportunity cost of needing to continuously work on upgrading their defence systems). The attacker never gains much, but if only one agency attacks the other, the defender suffers a lot more damage.}
\end{example}

\begin{figure}[h]
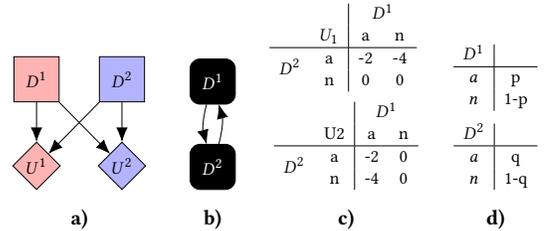

    \centering
    \begin{subfigure}[b]{0.3\linewidth}
        \centering
        \resizebox{0.75\width}{!}{
        \begin{influence-diagram}
            \node (D1) [decision, player1] {$D^1$};
            \node (D2) [decision, right = of D1, player2] {$D^2$};
            \node (U1) [utility, below = of D1, player1] {$U^1$};
            \node (U2) [utility, below = of D2, player2] {$U^2$};
            \edge {D1, D2} {U1};
            \edge {D1, D2} {U2};
        \end{influence-diagram}
        }
        \caption*{a)}
    \end{subfigure}
    \begin{subfigure}[b]{0.1\linewidth}
        \centering
        \resizebox{0.75\width}{!}{
        \begin{influence-diagram}
            \node (D1) [right = 2 of U, relevanceb] {$D^1$};
            \node (D2) [below = of D1, relevanceb] {$D^2$};
            \path (D1) edge[->, bend right=15] (D2);
            \path (D2) edge[->, bend right=15] (D1);
        \end{influence-diagram}
        }
        \caption*{b)}
    \end{subfigure}
    \begin{subfigure}[b]{0.3\linewidth}
        \centering
        \resizebox{0.75\width}{!}{
            \begin{tabular}{cc|cc}
            &    & \multicolumn{2}{c}{$D^1$}   \\
            & $U_1$ & a        & n \\ \hline
    \multirow{2}{*}{$D^2$} & a  & -2        & -4 \\
            & n  & 0        & 0 \\
            \end{tabular}
            }
            \vspace{0.1cm}
            \\
            \resizebox{0.75\width}{!}{
            \begin{tabular}{cc|cc}
            &    & \multicolumn{2}{c}{$D^1$}   \\
            & U2 & a        & n \\ \hline
    \multirow{2}{*}{$D^2$} & a  & -2        & 0 \\
            & n  & -4        & 0 \\
            \end{tabular}
            }
            \caption*{c)}
    \end{subfigure}
    \begin{subfigure}[b]{0.15\linewidth}
        \centering
            \resizebox{0.75\width}{!}{
            \begin{tabular}{c | c}
                $D^1$ & \\
                \hline
                $a$ & p \\
                $n$ & 1-p \\
            \end{tabular}
            }
            \vspace{0.1cm}
            \\
            % \vfill
            \resizebox{0.75\width}{!}{
             \begin{tabular}{c | c}
                $D^2$ & \\
                \hline
                $a$ & q \\
                $n$ & 1-q \\
            \end{tabular}
            }
            \caption*{d)}
    \end{subfigure}
\caption{A MAID a), corresponding relevance graph b), utility CPD tables c), and policy profile d) for Example \ref{ex:cyber}. 
}
\label{fig:tremble}
\end{figure}

Figure \ref{fig:tremble} shows a MAIM with its parametrisation and relevance graph for Example \ref{ex:cyber}. Using each player's parametrised policy in Figure \ref{fig:tremble} d) this MAIM has two NEs: at $p=q=1$ and $p=q=0$, either both governments attack ($a$) or not ($n$). Figures \ref{fig:THpolicy} a) and c) show player $1$'s best response policies for each of these NEs perturbed by $\epsilon$ to result in the perturbed MAIM $\macid(\epsilon)$. Figures \ref{fig:THpolicy} b) and d) show player $2$'s expected utility if they attack (or not) in response to player $1$ using the policy in \ref{fig:THpolicy} a) or c) respectively. For small $\epsilon>0$, player $2$'s best response to both the policies in Figures \ref{fig:THpolicy} a) and c) is to choose $D^2 = a$
and so the NE $p=q=0$ is not robust against trembles. The NE $p=q=1$ is this MAIM's only THPE.\footnote{In the case of a two-player game, a THPE removes all weakly dominated policies. Here, for both $D^1$ and $D^2$, the pure policy of choosing $n$ is weakly dominated by the pure policy of choosing $a$.}

\begin{figure}[h]
    \centering
    \begin{subfigure}[b]{0.22\linewidth}
        \centering
        \resizebox{0.75\width}{!}{
            \begin{tabular}{c | c}
                $D^1$ & \\
                \hline
                $a$ & $1-\epsilon$ \\
                $n$ & $\epsilon$ \\
            \end{tabular}}
            \caption*{a)}
    \end{subfigure} 
    \begin{subfigure}[b]{0.26\linewidth}
        \centering
        \resizebox{0.75\width}{!}{
            \begin{tabular}{c | c}
                $D^2$ & $\mathcal{U}^2_{\mathcal{M}}(\pi)$\\
                \hline
                $a$ & $-2 + 2\epsilon$ \\
                $n$ & $-4+4\epsilon$ \\
            \end{tabular}}
            \caption*{b)}
    \end{subfigure} 
    \begin{subfigure}[b]{0.22\linewidth}
        \centering
        \resizebox{0.75\width}{!}{
            \begin{tabular}{c | c}
                $D^1$ & \\
                \hline
                $a$ & $\epsilon$ \\
                $n$ & $1-\epsilon$ \\
            \end{tabular}}
            \caption*{c)}
    \end{subfigure} 
    \begin{subfigure}[b]{0.26\linewidth}
        \centering
        \resizebox{0.75\width}{!}{
            \begin{tabular}{c | c}
                 $D^2$ & $\mathcal{U}^2_{\mathcal{M}}(\pi)$\\
                \hline
                $a$ & $-2\epsilon$ \\
                $n$ & $-4\epsilon$ \\
            \end{tabular}}
            \caption*{d)}
    \end{subfigure} 
    \caption{Polices a) and c) for player 1 for each NE in the original MAIM, shown in Figure \ref{fig:tremble}, perturbed by $\epsilon$, and the expected utilities b) and d) for player 2 when choosing each $d \in \dom(D^2)$ in response to policies a) and c) respectively.}
    \label{fig:THpolicy}
\end{figure}

\subsection{Transformations and Equivalences}
\label{sec:trans_equivalences}

Both EFGs and MAIMs represent games graphically. In this section, we provide equivalence results between these models to demonstrate that alongside the increased compactness and structural clarity of MAIMs, the fundamental game-theoretic notions of subgames and equilibria are \emph{preserved} when converting an EFG to a MAIM. 

\subsubsection{\textbf{MAIM to EFG}}
\label{sec:MAIDtoEFG}
There are many ways to convert a MAIM into an EFG, but these differ in their computational costs \cite{koller2003multi, pearl2014probabilistic}. We give a full and formal transformation procedure, \bfsf{maim2efg}, in Appendix \ref{sec:macid2efg} based on that of K\&M. The idea is to use a topological ordering $\prec$ over the nodes of the MAID to construct the EFG game tree by splitting on each of the nodes in $\prec$. Because there can be more than one such ordering, the output of \bfsf{maim2efg} is a \emph{set} of EFGs.
Our codebase implements a more efficient transformation, keeping only utility nodes, decision nodes, and informational parents ($ \bigcup_{D\in\bm{D}} \Fa_{D}$).
This information is enough for computing equilibria, and can offer significant efficiency gains since the cost of solving an EFG depends on its size, which is exponential in the length of $\prec$. The resulting EFG can be fed into Gambit, a popular tool for solving EFGs \cite{mckelvey2006gambit}, though it may not contain enough information to fully recover the original MAIM.

\subsubsection{\textbf{EFG to MAIM}}
\label{sec:EFGtoMAID}
By encoding the CPDs for each variable in the MAIM using trees as opposed to tables, MAIMs can represent any decision-making problem using at most the same space, but often exponentially less space than an EFG \cite{koller2003multi}. In general, there are many MAIMs that can represent a given EFG. For instance, upon converting the EFG representation (Figure \ref{fig:taxiEFG}) of Example \ref{ex:taxis} to a MAIM (Figure \ref{fig:taxiMACID}), we could na\"{i}vely retain the EFG's root and two child nodes as three decision nodes ($D^1$, $D^2_a$, and $D^2_b$) in the MAIM. Alternatively, we could recognise that $D^2_a$ and $D^2_b$ both correspond to the same real world variable, the decision made following $D^1$, and thus combine them (as shown in Figure \ref{fig:taxiMACID}). In Appendix \ref{sec:EFG2MAID}, we formalise this notion and provide a procedure \bfsf{efg2maim} which maps an EFG to a \emph{unique}, canonical MAIM (including those with absent-mindedness \cite{piccione1997interpretation}).

\subsubsection{\textbf{Equivalences}}
\label{sec:equivalences}

We now provide a series of equivalence results between EFGs and MAIMs to fortify the game-theoretic foundations behind our analysis of MAIDs. Results are justified using intuitive sketches, with full proofs in Appendix \ref{sec:proofs}.

\begin{definition}
\label{def:feasibleDC}
    A decision context $\pa_D$ for a decision node $D$ in $\macid$ if \textbf{feasible} if there exists a policy profile $\pi$ where $\Pr^\pi(\pa_D) > 0$. A decision context $\pa_D$ is \textbf{null} if every player  always receives utility 0, i.e. $\mathcal{U}^i_{\mathcal{M}}(\pi'' \mid \pa_D) = \sum_{U_j \in \bm{U}^i}\sum_{u_j \in \dom(U_j)} \!\! u_j \Pr^\pi(U_j = u_j \mid \pa_D) = 0$ for all $i$ and any policy profile $\pi''$, or if it is infeasible.
\end{definition}

\begin{definition}
    \label{def:equiv}
    We say that a MAIM $\macid$ is \textbf{equivalent} to an EFG $\efg$ (and vice versa) if there is a bijection $f: \Sigma \rightarrow \Pi / \sim$ between the strategies in $\efg$ and a partition of the policies in $\macid$ (the quotient set of $\Pi$ by an equivalence relation $\sim$) such that:
    \begin{itemize}
        \item $\pi\sim\pi'$ only if $\pi$ and $\pi'$ differ only on null decision contexts.
        \item For every $\pi \in f(\sigma)$ and every player $i$, $\mathcal{U}^i_{\efg}(\sigma) = \mathcal{U}^i_{\macid}(\pi)$.
    \end{itemize}
    We refer to $f$ as a \textbf{natural mapping} between $\efg$ and $\macid$.
\end{definition}

The reason we use an equivalence relation on the space of policies is that \bfsf{efg2maim} can introduce additional null decision contexts: those that do not correspond to any path through the EFG. Although this equivalence is not exact, it is sufficient for preserving the essential game-theoretic features of each representation, as we show below. We begin with a supporting lemma that justifies the correctness of our procedures \bfsf{maim2efg} and \bfsf{efg2maim}, and forms the basis of our other results.

\begin{lemma}
    \label{lem:correspondence}
    If $\efg\in\bfsf{maim2efg}(\macid)$ or
    $\macid =\bfsf{efg2maim}(\efg)$ then $\efg$ and $\macid$ are equivalent.
\end{lemma}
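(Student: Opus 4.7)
The plan is to prove the two directions separately, but to unify the analysis through a single natural mapping. In each direction, I would first explicitly construct the candidate map $f$ from the given transformation procedure, then verify (i) that $f$ is well-defined as a map into $\Pi/\!\sim$, (ii) that it is a bijection, and (iii) that expected utilities match. The key observation tying the two directions together is that both \bfsf{maim2efg} and \bfsf{efg2maim} are designed to preserve the joint distribution $\Pr^\pi = P^\sigma$ over assignments of the variables/leaves under corresponding strategies and policies, from which utility equivalence will follow almost immediately.

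For the direction $\efg \in \bfsf{maim2efg}(\macid)$, I would first observe that the topological ordering $\prec$ used to build the tree ensures each non-terminal vertex of $\efg$ corresponds to a partial instantiation $\bm{v}_{\prec V}$ of the MAIM nodes preceding some $V$. Information sets in $\efg$ are formed by grouping nodes sharing the same instantiation $\pa_D$ of $D$'s informational parents, so there is a natural one-to-one correspondence between information sets $I$ of player $i$ and pairs $(D, \pa_D)$ with $D \in \bm{D}^i$. Define $f(\sigma)$ to be the singleton containing the policy $\pi$ with $\pi_D(\cdot \mid \pa_D) \coloneqq \sigma_{I(D,\pa_D)}$; here $\sim$ is essentially trivial since every decision context of $\macid$ is represented by at least one information set. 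Showing bijectivity is then routine, and utility preservation follows by induction on $\prec$ using that the chance CPDs of $\macid$ were used verbatim as the chance probabilities in $\efg$.

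For the direction $\macid = \bfsf{efg2maim}(\efg)$, the construction of $\macid$ aggregates all EFG vertices corresponding to the same ``real-world'' decision point into a single decision node $D$, with $\Pa_D$ being the variables whose instantiations parameterise that player's information sets. A strategy $\sigma$ is mapped to the equivalence class of policies $\pi$ that agree with $\sigma$ on every decision context $\pa_D$ that corresponds to at least one EFG information set. Two policies $\pi,\pi'$ are declared equivalent if they differ only on decision contexts $\pa_D$ that are either (a) never realised along any path of the EFG, and hence have $\Pr^{\pi''}(\pa_D)=0$ for every $\pi''$, or (b) reachable but only after some utility-zero endpoints (if the construction introduces such). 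In both cases such contexts are null in the sense of Definition \ref{def:feasibleDC}. Bijectivity onto $\Pi/\!\sim$ is immediate from the construction of $\macid$'s parent sets; utility equivalence follows because for each realisable path $\rho$ in $\efg$ the probability $P^\sigma(\rho)$ decomposes as a product of chance and decision probabilities that is precisely $\Pr^\pi$ of the matching assignment in $\macid$, and the utility of $\rho[\bm{L}]$ is stored (by construction) in the CPDs of the utility nodes $\bm{U}^i$.

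The main obstacle will be the treatment of the equivalence relation $\sim$ in the \bfsf{efg2maim} direction: I must argue that \emph{every} discrepancy between distinct policies in an equivalence class is confined to null decision contexts, and conversely that policies differing only on such contexts always get mapped to the same strategy. This requires a careful case analysis of how \bfsf{efg2maim} treats parent sets that can give rise to $(D,\pa_D)$ combinations which are not realisable (for instance when information-set structure forces a parent set larger than the set of variables that vary across paths reaching $D$), together with the observation that nulls contribute zero to $\mathcal{U}^i_\macid$. By contrast, the utility-equality step and the bijectivity reduce to bookkeeping once the null-context issue is handled, and the \bfsf{maim2efg} direction is essentially free of this complication and can be dispatched quickly.
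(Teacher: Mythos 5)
Your proposal is correct and follows essentially the same route as the paper's own proof: both directions are handled by direct construction, identifying information sets of $\efg$ with feasible decision contexts $(D,\pa_D)$ of $\macid$, and both isolate the same subtlety---that \bfsf{efg2maim} can enlarge the policy space via unrealisable (null) decision contexts, which is exactly what the quotient by $\sim$ absorbs and which contributes nothing to expected utility. Your treatment is, if anything, slightly more explicit than the paper's (e.g.\ the induction on $\prec$ for utility preservation), but there is no substantive difference in approach.
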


This lemma follows directly by construction from the two procedures, \bfsf{maim2efg} and \bfsf{efg2maim} respectively. The intuition is that the information sets in an EFG correspond to the non-null decision contexts in a MAIM, and thus an EFG's behavioural strategy profile $\sigma$ corresponds to a policy profile $\pi$ in the MAIM, and vice versa. As an immediate consequence, we see that NEs are preserved by our transformations between EFGs and MAIMs.

\begin{corollary}
\label{prop:BE}
    If $\efg\in\bfsf{maim2efg}(\macid)$ or $\macid = \bfsf{efg2maim}(\efg)$ then there is a natural mapping $f$ between $\efg$ and $\macid$ such that $\sigma$ is an NE in $\efg$ if and only if any $\pi \in f(\sigma)$ is an NE in $\macid$.
\end{corollary}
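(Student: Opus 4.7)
The plan is to derive this corollary directly from Lemma~\ref{lem:correspondence} by exploiting two properties of the natural mapping $f$: (i) it preserves expected utility for every player, and (ii) it decomposes per player, so that a unilateral deviation in one representation corresponds to a unilateral deviation in the other. Property (i) is given explicitly by Definition~\ref{def:equiv}, while property (ii) must be extracted from the constructions \bfsf{maim2efg} and \bfsf{efg2maim}: in both procedures, the labelling/assignment of behavioural probabilities is done decision-by-decision, and decisions are partitioned by player, so one can define player-level bijections $f^i$ such that $\pi \in f(\sigma)$ and $\pi^{-i}$ remaining fixed corresponds exactly to $\sigma^{-i}$ remaining fixed while $\sigma^i$ varies over $\Sigma^i$.

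Given these two properties, I would prove the forward direction by contraposition. Fix a strategy profile $\sigma \in \Sigma$ and any $\pi \in f(\sigma)$, and suppose $\pi$ is not an NE in $\macid$. Then some player $i$ has a deviation $\hat{\pi}^i \in \Pi^i$ with $\mathcal{U}^i_{\macid}(\hat{\pi}^i, \pi^{-i}) > \mathcal{U}^i_{\macid}(\pi^i, \pi^{-i})$. Using property (ii), let $\hat{\sigma}^i \in \Sigma^i$ be the strategy such that $(\hat{\pi}^i, \pi^{-i}) \in f(\hat{\sigma}^i, \sigma^{-i})$. Applying property (i) to both sides of the strict inequality yields $\mathcal{U}^i_{\efg}(\hat{\sigma}^i, \sigma^{-i}) > \mathcal{U}^i_{\efg}(\sigma^i, \sigma^{-i})$, so $\sigma$ is not an NE in $\efg$. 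The reverse direction is entirely symmetric: a profitable EFG deviation $\hat{\sigma}^i$ lifts through $f$ to a profitable MAIM deviation $\hat{\pi}^i$, contradicting the assumption that $\pi$ is an NE.

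A small but necessary observation concerns the equivalence relation $\sim$. Since $\pi \sim \pi'$ only on null decision contexts, and null contexts contribute zero expected utility (either through probability zero or through yielding zero payoff by Definition~\ref{def:feasibleDC}), all members of an equivalence class $f(\sigma)$ share the same expected utilities and the same set of profitable unilateral deviations. Hence the statement ``\emph{any} $\pi \in f(\sigma)$ is an NE'' is well-defined and equivalent to ``\emph{every} $\pi \in f(\sigma)$ is an NE''.

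The main obstacle is really just making the per-player decomposition explicit, i.e.\ verifying from the appendix constructions that $f$ can be written as a product of per-player bijections $f^i$. This is somewhat routine for \bfsf{efg2maim}, since each EFG information set for player $i$ becomes a single decision context for a node in $\bm{D}^i$, but requires a little more bookkeeping for \bfsf{maim2efg}, where a single MAID decision $D \in \bm{D}^i$ may correspond to multiple information sets in the induced tree (one per instantiation of $\Pa_D$). In both cases, though, the dependence on player $i$'s choices is isolated from the others, so the required per-player factorisation of $f$ holds.
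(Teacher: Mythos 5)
Your proposal is correct and follows essentially the same route as the paper's own proof: both derive the result from Lemma~\ref{lem:correspondence} via utility preservation, the invariance of expected utility across each equivalence class $f(\sigma)$, and the correspondence of per-player unilateral deviations (which the paper compresses into the assertion that $\sigma^i \in \argmax_{\hat{\sigma}^i} \mathcal{U}^i_{\efg}(\sigma^{-i},\hat{\sigma}^i)$ iff $\pi^i \in \argmax_{\hat{\pi}^i} \mathcal{U}^i_{\macid}(\pi^{-i},\hat{\pi}^i)$). Your explicit treatment of the per-player factorisation of $f$ is a point the paper leaves implicit, but it is the same underlying argument.
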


For an EFG subgame $\efg'$, the variables outside $\efg'$ are neither strategically nor probabilistically relevant to those in the corresponding MAIM subgame $\macid'$.
This means that EFG subgames have equivalent counterparts in the equivalent MAIM, as established by the following proposition.

\begin{proposition}
\label{prop:subgames}
    If $\efg\in\bfsf{maim2efg}(\macid)$ or $\macid = \bfsf{efg2maim}(\efg)$ then there is a natural mapping $f$ between $\efg$ and $\macid$ such that, for every EFG subgame $\efg'$ in $\efg$ there is a MAIM subgame $\macid'$ in $\macid$ that is equivalent to $\efg'$ under the natural mapping $f$ restricted to the strategies of $\efg'$.
\end{proposition}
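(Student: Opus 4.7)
The plan is to leverage Lemma \ref{lem:correspondence} and produce, for each EFG subgame $\efg'$, an explicit candidate MAIM subgame $\macid'$, then show (i) $\macid'$ is a bona fide MAIM subgame in the sense of Definition \ref{def:MACIDsubgame}, and (ii) the natural mapping $f$ from Lemma \ref{lem:correspondence}, when restricted to the strategies of $\efg'$, witnesses the equivalence between $\efg'$ and $\macid'$. Since both directions of the transformation yield equivalent representations, it suffices to fix one direction and then transfer the statement via the inverse mapping; I would carry out the argument in the direction $\macid = \bfsf{efg2maim}(\efg)$ and then deduce the other direction from the fact that \bfsf{maim2efg} inverts \bfsf{efg2maim} up to null decision contexts.

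The concrete candidate is defined as follows. Let $\efg'$ be an EFG subgame of $\efg$ rooted at some vertex $R'$, so $\efg'$ consists of $R'$ together with all its descendants in the game tree and is closed under information sets. Under $\bfsf{efg2maim}$, each information set in $\efg'$ collapses to a single decision node in $\macid$, each chance vertex of $\efg'$ contributes to a chance node, and each leaf contributes to a utility node. Let $\bm{V'}$ be the union of all such nodes in $\macid$ that arise from vertices of $\efg'$. I would then propose $\macid'$ to be the MAID subgame on $\bm{V'}$, and for the MAIM subgame I would condition on the instantiation $\bm{y}$ of the non-subgame parents of $\bm{V'}$ determined by the unique path from the root of $\efg$ down to $R'$.

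Verifying that $\bm{V'}$ is a subgame base requires two checks. Closure under directed paths follows because an EFG subgame is closed under descendants in the tree, and \bfsf{efg2maim} places an edge in $\macid$ only when the corresponding tree relation (ancestry along a play or information edge) is present, so any directed $\macid$-path between nodes originating in $\efg'$ stays within $\efg'$. The more delicate requirement is closure under r-reachability: I would argue that if $V$ is r-reachable from $D \in \bm{V'}$, then a newly added parent $\hat V$ of $V$ is d-connected to $\bm{U}^i \cap \Desc_D \mid \Fa_D$, and trace the corresponding path back through the construction of \bfsf{efg2maim} to a sequence of information or probabilistic links inside $\efg'$; the EFG-subgame closure properties (no information set straddles the boundary of $\efg'$, no chance move outside $\efg'$ has descendants inside it except through $R'$) then force $V \in \bm{V'}$.

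The main obstacle will be this r-reachability closure step, since it is the one place where the purely tree-theoretic notion of an EFG subgame has to be matched against the graphical d-separation criterion on the MAID. I would handle it by a contrapositive argument: assume $V \notin \bm{V'}$; then the corresponding EFG vertex lies strictly outside $\efg'$, so every tree-path connecting it to the utilities in $\Desc_D$ must exit and re-enter $\efg'$ through a node that is either an ancestor of $R'$ (blocked by conditioning on $\Fa_D$, which contains the context fixing $\bm{y}$) or through an information set not contained in $\efg'$ (ruled out by EFG-subgame closure). Once closure is established, equivalence of $\efg'$ and $\macid'$ under $f$ restricted to the strategies of $\efg'$ is immediate from Lemma \ref{lem:correspondence} applied to the subgame, since $f$ acts information-set-by-information-set and the restriction precisely respects the partition induced by $\sim$ on the sub-policy space.
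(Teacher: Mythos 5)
Your proposal follows essentially the same route as the paper's proof: the same candidate $\bm{V}'$ (the MAIM nodes whose intervention sets meet $\efg'$), the same instantiation $\bm{y}$ read off the path from the root of $\efg$ to $R'$, the same two subgame-base checks, and the same final appeal to the correspondence of Lemma~\ref{lem:correspondence} restricted to the information sets of $\efg'$. There are, however, two places where the argument as you describe it would not quite close. First, in the r-reachability closure step you reason about ``tree-paths'' from $V$ to the utilities, but the criterion concerns d-connecting paths from the added parent $\hat V$ to $\bm{U}^i \cap \Desc_D$ given $\Fa_D$ in the MAID, and such a path may begin with a collider, $\hat V \rightarrow V \leftarrow \cdots$; your dichotomy (re-entry via an ancestor of $R'$, or via an information set straddling the boundary of $\efg'$) does not obviously cover that case. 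The paper's proof splits on the shape of the path: in the chain case $\hat V \rightarrow V \rightarrow \cdots$ it shows that $V$ itself, lying outside $\efg'$, must be observed by $D$ (by information-set closure), so $V \in \Fa_D$ blocks the path; in the collider case it locates the \emph{first fork node} $X$ on the path, notes that $X$ is a MAID-ancestor of $V$ and hence also lies outside $\efg'$, and applies the same observation to $X$. That fork-node step is the missing idea. Second, your plan to prove only the $\macid = \bfsf{efg2maim}(\efg)$ direction and transfer the result via the claim that $\bfsf{maim2efg}$ inverts $\bfsf{efg2maim}$ up to null decision contexts rests on an inversion property the paper never establishes (and which fails in general, since $\bfsf{maim2efg}$ returns a set of EFGs and the round trip need not reproduce the original object); the paper instead runs the identical path-blocking argument for both constructions simultaneously, using only the correspondence between MAID nodes and EFG vertices, which is all that is actually needed.
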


This restriction of $f$ to the strategies in $\efg'$ can be made precise by considering only those non-null decision contexts that correspond to the information sets contained in $\efg'$, as in the case for Lemma \ref{lem:correspondence}. Given Proposition \ref{prop:subgames} and Corollary \ref{prop:BE}, it can easily be seen that not only are NEs preserved when representing EFGs as MAIMs, but so too are SPEs. We remark, however, that as there may be more subgames in MAIM than in an equivalent EFG, that the criterion of subgame perfectness may be slightly stronger in the MAIM, and so not all SPEs in an EFG may be SPEs in the equivalent MAIM. This additional strength can be useful in ruling out what we intuitively view as `irrational' behaviour, even when it does not fall under a particular subgame in the EFG.

\begin{corollary}
\label{prop:SPE}
    If $\efg\in\bfsf{maim2efg}(\macid)$ or $\macid = \bfsf{efg2maim}(\efg)$ then there is a natural mapping $f$ between $\efg$ and $\macid$ such that if any $\pi \in f(\sigma)$ is an SPE in $\macid$, then $\sigma$ is an SPE in $\efg$.
\end{corollary}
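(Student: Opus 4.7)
The plan is to chain together Lemma \ref{lem:correspondence}, Proposition \ref{prop:subgames}, and the utility-preservation argument underlying Corollary \ref{prop:BE}, applied at the subgame level. Fix the natural mapping $f$ guaranteed by Lemma \ref{lem:correspondence}, let $\sigma$ be a strategy profile of $\efg$, and suppose some $\pi \in f(\sigma)$ is an SPE in $\macid$. To show that $\sigma$ is an SPE in $\efg$, it suffices to verify that for every EFG subgame $\efg'$ of $\efg$, the restriction $\sigma|_{\efg'}$ is an NE in $\efg'$.

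Fix an EFG subgame $\efg'$. By Proposition \ref{prop:subgames}, there is a MAIM subgame $\macid'$ of $\macid$ that is equivalent to $\efg'$ under $f$ restricted to the strategies of $\efg'$; call this restricted natural mapping $f'$. Since $\pi$ is an SPE in $\macid$, the restriction $\pi'$ of $\pi$ to the decisions in $\macid'$ is an NE in $\macid'$. Because $f'$ is a natural mapping between $\efg'$ and $\macid'$, Definition \ref{def:equiv} ensures that $\pi'$ and $\sigma|_{\efg'} = (f')^{-1}([\pi'])$ yield identical expected utilities for every player, and the same holds for any unilateral deviation (which, up to null decision contexts, is transported across $f'$). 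Hence the best-response property of $\pi'$ in $\macid'$ transfers to $\sigma|_{\efg'}$ in $\efg'$ — this is exactly the reasoning behind Corollary \ref{prop:BE}, now applied at the subgame level. So $\sigma|_{\efg'}$ is an NE in $\efg'$, and since $\efg'$ was arbitrary, $\sigma$ is an SPE in $\efg$.

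The main obstacle I anticipate is the careful bookkeeping around MAIM subgames being parametrised by instantiations $\bm{y}$ of nodes outside $\bm{V'}$, so that what we casually call "the restriction of $\pi$ to $\macid'$" is in fact a family of restrictions, one per feasible $\bm{y}$. To apply the SPE hypothesis, one must argue that $\pi$'s NE property on every such feasible MAIM subgame (including all instantiations of outside parents) yields an NE of the \emph{aggregated} restriction $\pi|_{\macid'}$ that $f'$ pairs with $\sigma|_{\efg'}$; equivalently, one must check that deviations from $\sigma|_{\efg'}$ in $\efg'$ correspond, under $f'$, to deviations in the $\macid'$ whose payoffs are controlled by the NE property across all feasible $\bm{y}$. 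For games of perfect information the two notions coincide trivially, and in general this follows because an EFG subgame's root pins down (up to null decision contexts) the relevant $\bm{y}$-instantiation via the path leading to it.

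Once this correspondence between restrictions is established, no further computation is required: the equivalence $f'$ is utility-preserving on all non-null decision contexts, so best responses transfer across it, and the conclusion is immediate. I note that the converse direction (EFG SPE implies MAIM SPE) is \emph{not} claimed, consistent with the paper's earlier observation that MAIMs may possess additional subgames, making the MAIM SPE condition strictly stronger than its EFG counterpart.
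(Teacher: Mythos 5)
Your proposal is correct and follows essentially the same route as the paper: combine Proposition \ref{prop:subgames} (every EFG subgame has an equivalent MAIM subgame under the restricted natural mapping) with the NE-transfer argument of Corollary \ref{prop:BE} to conclude that the SPE property of $\pi$ in $\macid$ forces $\sigma$ to be an NE in every EFG subgame. Your additional care about the family of MAIM subgames indexed by instantiations $\bm{y}$, and the observation that the EFG subgame's root pins down the relevant $\bm{y}$ up to null decision contexts, is a detail the paper's proof glosses over but does not change the argument.
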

 
Finally, we derive an equivalence between the THPEs in EFGs and those in MAIMs. In order to do so, it suffices to prove an equivalence between perturbed versions of the corresponding games $\efg(\delta_k)$ and $\macid(\delta_k)$, which can easily be done via construction using \bfsf{efg2maim}, and then by applying Lemma \ref{lem:correspondence} and Corollary \ref{prop:BE}. 
 
\begin{proposition}
\label{prop:THPE}
    If $\efg\in\bfsf{maim2efg}(\macid)$ or $\macid = \bfsf{efg2maim}(\efg)$ then there is a natural mapping $f$ between $\efg$ and $\macid$ such that $\sigma$ is a THPE in $\efg$ if and only if any $\pi \in f(\sigma)$ is a THPE in $\macid$.
\end{proposition}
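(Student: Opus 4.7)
The plan is to prove both directions by showing that perturbing $\efg$ and $\macid$ in compatible ways yields equivalent games, then invoking Corollary~\ref{prop:BE}. Under the natural mapping $f$ supplied by Lemma~\ref{lem:correspondence}, the key observation is that non-null decision contexts of $\macid$ stand in bijection with the information sets of $\efg$: an information set represents a history reachable with positive probability under some strategy profile, which is precisely the condition of non-nullness.

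First I would transfer perturbation vectors between the two models. Given a perturbation $\delta_k$ on $\macid$, define a matching perturbation of $\efg$ by assigning to each information set $I^i_j$ and action $d$ the lower bound $\epsilon^d_{\pa_D}$ at the corresponding non-null decision context under the above bijection. Conversely, any perturbation of $\efg$ extends to $\macid$ by picking arbitrary positive trembles on the null contexts. Re-applying \bfsf{maim2efg} or \bfsf{efg2maim} to the perturbed games (which only restricts the admissible space of decision rules / behavioural strategies while leaving the graph, CPDs, and utility structure untouched) yields, by Lemma~\ref{lem:correspondence}, an equivalence $\macid(\delta_k)\leftrightarrow\efg(\delta_k)$ for every $k$.

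Having established pairwise equivalence, I would apply Corollary~\ref{prop:BE} to transport NEs in each perturbed game. For the ``only if'' direction, a sequence $\pi_k\to\pi$ of NEs witnessing $\pi$'s THPE status in $\macid$ maps to a sequence of NEs $\sigma_k$ in $\efg(\delta_k)$ that converges on the non-null contexts (where $f$ is a bijection on the associated simplices of decision rules) to $\sigma\coloneqq f(\pi)$, certifying $\sigma$ as a THPE in $\efg$. The ``if'' direction proceeds symmetrically: a witnessing sequence $\sigma_k\to\sigma$ lifts to NEs $\pi_k$ in $\macid(\delta_k)$ whose behaviour on non-null contexts converges to that of a chosen representative $\pi\in f(\sigma)$.

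The hard part will be the careful handling of null decision contexts, which have no counterpart in $\efg$. Specifically, verifying that \emph{every} representative $\pi\in f(\sigma)$ is a THPE --- and not merely some particular one --- requires showing that for any prescribed decision rule on null contexts, one can engineer a sequence of trembles $\delta_k$ so that the resulting NEs in $\macid(\delta_k)$ converge to that rule on those contexts. This follows from the freedom to independently perturb null contexts without altering expected utilities or the best-response condition (by Definition~\ref{def:feasibleDC}, null contexts contribute nothing to $\mathcal{U}^i_\macid$), which lets us tailor the limiting behaviour on these contexts to match any desired $\pi$.
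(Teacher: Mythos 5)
Your proposal follows essentially the same route as the paper's proof: establish an equivalence between the perturbed games $\efg(\delta_k)$ and $\macid(\delta_k)$ by matching the entries of the perturbation vectors across the correspondence between information sets and non-null decision contexts, then invoke Lemma~\ref{lem:correspondence} and Corollary~\ref{prop:BE}. Your additional care in engineering the limiting behaviour on null decision contexts is a sound (and slightly more explicit) elaboration of the paper's one-line remark that the extra perturbation entries on null contexts are irrelevant to the determination of NEs in the perturbed MAIM.
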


This series of equivalence results serves to justify MAIDs as an appropriate choice of game representation. Not only do they provide computational advantages over EFGs, we have shown that they preserve the most fundamental game-theoretic concepts commonly employed in EFGs.

\section{Implementation}
\label{sec:implementation}

K\&M showed that the explicit representation of dependencies between variables in MAIDs can substantially reduce the computational cost of finding an NE \cite{koller2003multi, milch2008ignorable}. In this section, we describe a modified version of their algorithm and use MAID subgames to find all pure SPEs (see also Appendix \ref{sec:codebase}). We show that MAID subgames exhibit the familiar subgame property of being useful for `generalised backwards induction' algorithms \cite{kaminski2019generalized}.

Beginning with an arbitrary policy profile $\pi(0)$ across all decision nodes in the original MAIM, $\macid$, we optimise decision rules associated to each $D \in \bm{D}$ by iterating backwards through a MAID subgame ordering from $\macid_m$ to $\macid_0$. In what follows, we write $\macid_i \prec \macid_j$ if $\macid_j$ is a proper MAID subgame of $\macid_i$, and $\bm{D}_k$ for the decision nodes in $\macid_k$. Several MAID subgames can have the same set of decisions, $\bm{D}_k$, so we choose a single MAID subgame $\macid_k$ (one with the fewest nodes $\bm{V}'$) for each $\bm{D}_k$ and discard the others. Each MAID in this ordering has an associated MAIM for each setting of the nodes which have a child in $\bm{V}'$. 

When considering a MAIM for $\macid_{m-i}$, the decision rules for all decision nodes in proper MAIM subgames of $\macid_{m-i}$ will have already been optimised and fixed in previous iterations, so these are now chance nodes in $\macid_{m-i}$. In addition, none of the decision nodes $\bm{D}_{m-i}$ in $\macid_{m-i}$ strategically rely on any of the decision nodes outside of $\macid_{m-i}$. Therefore, this step is localised to computing the optimal decision rules only for $\bm{D}_{m-i}$.

The next step depends on $\vert \bm{D}_{m-i} \vert$. If only one decision node $D \in \bm{D}^j$ remains, as in Figure \ref{fig:taxiMACID} c) for example, then its optimal decision rule is that which maximises player $j$'s expected utility in each of the MAIMs (for each value of $D^1$) for this MAID subgame. If $\vert \bm{D}_{m-i} \vert > 1$, the relevance graph of $\macid_{m-i}$ is cyclic and so the decision nodes strategically rely on one another. We must therefore call a subroutine: the MAIMs for the MAID subgame induced by the policy profile at that step, $\macid(\pi_{-\bm{D}_{m-i}}(i))$, are converted to EFGs to be solved using Gambit \cite{mckelvey2006gambit}. Algorithm \ref{algo:SPE} shows the full procedure.

\begin{algorithm}
    \caption{}
    \label{algo:SPE}
      \begin{algorithmic}[1]
        \Statex \textbf{Input:} MAIM $\macid = (\bm{N}, \bm{V}, \bm{E}, \theta)$
        \Statex \textbf{Output:} SPE $\pi$
        \State initialise $\pi(0)$ as an arbitrary fully mixed policy profile
        \State compute an ordering $\prec$ over the subgames $\macid_0,\dots,\macid_m$ in $\macid$
        \For{i = 0 to m-1}
            \State compute a best response policy profile $\pi^*_{\bm{D}_{m-i}}$ for all decision \hspace*{4mm} nodes in $\bm{D}_{m-i}$ using $\macid(\pi_{-\bm{D}_{m-i}}(i))$
            \State $\pi(i+1) \gets (\pi_{-\bm{D}_{m-i}}(i),\pi^*_{\bm{D}_{m-i}})$
        \EndFor
    \State \textbf{return} $\pi(m)$
    \end{algorithmic}
\end{algorithm}

It is more efficient to pass the EFGs in the algorithm's subroutine to an EFG solver such as Gambit \cite{mckelvey2006gambit}, rather than passing an EFG for the entire original MAIM. In the induced MAIM $\macid(\pi_{-\bm{D}_{m-i}}(i))$, all decision nodes in the proper MAIM subgames have been converted into chance nodes. In our MAID to EFG transformation we need only split on decision nodes and their informational parents, so the size of the EFG is exponential in $\vert \Fa_{\bm{D}_{m-i}} \vert$. As the time complexity of solving an EFG depends on its size, the cost of solving a MAIM using Algorithm \ref{algo:SPE} is never greater than solving an equivalent EFG representation of the original game, and is exponentially faster in many cases.

Our open-source Python codebase\footnote{Available online at \href{https://github.com/causalincentives/pycid}{\texttt{https://github.com/causalincentives/pycid}}.} implements this procedure, provides methods for finding and plotting MAIDs $\macid$, along with $Rel(\macid)$ and $ConRel(\macid)$, and converts any MAIM into an EFG to be used with Gambit. Our aim is to provide the necessary computational tools for researchers and practitioners to develop further applications of MAIDs.

\section{Discussion and conclusions}
\label{sec:discussion}
This work has extended previous results on MAIDs by introducing the concept of a MAID subgame and a range of key equilibrium refinements. K\&M argued that MAIDs offer several benefits \cite{koller2003multi}. First, MAIDs can represent games more concisely than EFGs. Second, because a parametrised MAID is a probabilistic graphical model, the probabilistic dependencies between chance and decision variables can be exploited in order to identify whether decision nodes strategically rely on one another; we used this to define MAID subgames and our resulting equilibrium refinements. Separately, MAIDs can lead to substantial savings in the computational cost of finding an SPE; in Section \ref{sec:implementation}, we have described a modified version of an algorithm of K\&M and implemented it in an open-source codebase.

These benefits of MAIDS, coupled with the theoretical and practical contributions of this paper, provide a rich basis for future work. One avenue for such work that we are already pursuing is to extend the analysis of incentives \cite{Everitt2021, carey2020incentives} to the multi-agent setting by interpreting the directed edges in MAIDs causally. One could then investigate which variables in the graph each agent has an incentive to observe or control, and which reasoning patterns are involved \cite{pfeffer2007reasoning}, given that all of the agents in the MAID are playing a certain equilibrium refinement.

% %%%%%%%%%%%%%%%%%%%%%%%%%%%%%%%%%%%%%%%%%%%%%%%%%%%%%%%%%%%%%%%%%%%%%%%%

\begin{acks}
The authors wish to thank Ryan Carey for invaluable feedback and assistance while completing this work, as well as Zac Kenton, Colin Rowat, and several anonymous reviewers for helpful comments. Hammond acknowledges the support of an EPSRC Doctoral Training Partnership studentship (Reference: 2218880). Fox acknowledges the support of the EPSRC Centre for Doctoral Training in Autonomous Intelligent Machines and Systems (Reference: EP/S024050/1).
\end{acks}

\balance

% %%%%%%%%%%%%%%%%%%%%%%%%%%%%%%%%%%%%%%%%%%%%%%%%%%%%%%%%%%%%%%%%%%%%%%%%

% %%% The next two lines define, first, the bibliography style to be 
% %%% applied, and, second, the bibliography file to be used.

\bibliographystyle{ACM-Reference-Format} 
\bibliography{ms}

% %%%%%%%%%%%%%%%%%%%%%%%%%%%%%%%%%%%%%%%%%%%%%%%%%%%%%%%%%%%%%%%%%%%%%%%%
% \newpage

% \newpage
% \appendix
% %%%%%%%%%%%%%%%%%%%%%%%%%%%%%%%%%%%%%%%%%%%%%%%%%%%%%%%%%%%%%%%%%%%%%%%%
% \newpage

\renewcommand\thesection{\Alph{section}}
\setcounter{section}{0}
\setcounter{lemma}{0}
\setcounter{proposition}{0}
\setcounter{theorem}{0}
\setcounter{corollary}{0}

\onecolumn

\section{Transformations between Game Representations}
\label{sec:transformations}

\subsection{MAIM to EFG}
\label{sec:macid2efg}

In Section \ref{sec:MAIDtoEFG}, we mentioned that there are many ways of converting a MAIM into an EFG. In this section, we provide an encoding transforming a MAIM, $\macid = (\bm{N}, \bm{V}, \bm{E}, \theta)$, into an EFG, $\efg = (\bm{N}, T, \bm{P}, \bm{D}, \lambda, \bm{I}, U)$. One can decide on whether one wants to keep record of the structure of the original MAID, or if one wants to optimise for complexity by deciding which of $\macid$'s nodes we split on in $\efg$. If one wants to be able to preserve all of the MAID's structure (e.g. the dependencies between variables) then the set of splitting nodes in the resulting tree is $\bm{S} = \bm{X} \cup \bm{D} \subset \bm{V}$. If instead, however, one wants to minimise the complexity of calculating game equilibria, then one only needs to split on the MAID's decision nodes and informational parents, $\bm{S} = \bm{D} \cup \bigcup_{D\in\bm{D}} \Pa_{D}$ \cite{pearl2014probabilistic}. This is because the EFG's tree size will be exponential in the size of this set, $\vert \bm{S} \vert$. The following procedure, which we refer to as \bfsf{maim2efg}, is based on that of K\&M \cite{koller2003multi}. Given a MAIM $\macid = (\bm{N}, \bm{V}, \bm{E}, \theta)$ we define an equivalent EFG $\efg = (\bm{N}, T, \bm{P}, \bm{D}, \lambda, \bm{I}, U)$ as follows:
\begin{itemize}
    \item Choose a topological ordering $\prec \: = S_1,\dots,S_n$ over all nodes in $\bm{S}$ such that if $S_j$ is a descendent of $S_i$, then $S_i \prec S_j$.\footnote{This topological ordering will, in general, be non-unique.} Further, define $\bm{S}_i^\prec = \{S' : S' \prec S_i\}$ to denote the set of nodes in $\bm{S}$ that come before $S_i$ in $\prec$.
    
    \item The set of players, $\bm{N}$, in $\efg$ is the same as in $\mathcal{M}$.
    
    \item The tree $T$ is a symmetric tree with each path containing splits over all the variables in $\bm{S}$ in the order defined by $\prec$. 
    
    \item For all nodes $S \in \bm{S}$, if $S$ is a chance node in $\macid$, add it to $\efg$'s set of chance nodes $\bm{V}^0$. Else, if $S$ is a decision node belonging to some player $S \in \bm{D}^i$, add $S$ to player $i$'s set of nodes in $\efg$, $\bm{V}^i$.
    
    \item Label each node $Z$ in $T$ with an instantiation $\mu(Z)$ corresponding to the values taken by each EFG node (i.e., the branch followed from this node) on the path from the tree's root node $R$ to $Z$.
    
    \item For every node $Y$ in $T$ such that its corresponding node $S_Y \in \bm{S}$ is a chance node in $\mathcal{M}$, we determine the probability distribution $P_Y : \Ch_{Y} \rightarrow [0,1]$ over its children $Z \in \Ch_Y$ (each child corresponds to a value $s_y \in \dom(S_Y)$) by querying that node's CPD table with the instantiation label at $Y$. In other words, $P_Y(Z) \coloneqq \Pr(Y=s_y\mid\bm{S}_Y^\prec = \mu(Z))$. This equation can be simplified because in a Bayesian network the CPD for each node $S_Y$ only depends on the values of its parents, $\pa_{S_Y}$. Therefore, we have:
        $$P_Y(Z) \coloneqq \Pr(S_Y=s_y\mid\bm{S}_Y^\prec = \mu(Z))
         = \Pr(S_Y = s_y \mid \bm{S}_Y^\prec[{\Pa_{S_Y}}] = \mu(Z)[\pa_{S_Y}])$$
where $\bm{S}_Y^\prec[\Pa_{S_Y}]$, $\mu(Z)[\pa_{S_Y}]$ are the restriction of $\bm{S}_Y^\prec$ and $\mu(Z)$ to the variables $\Pa_{S_Y}$ and corresponding values $\pa_{S_Y}$. If $P_Y(Z) = 0$ for some child $Z$, then remove that child from the EFG.
    
    \item The set of available decisions $\bm{D}^i_j$ for player $i$ at node $V^i_j$ in the EFG is given by the domain of the corresponding decision node $D^i$ in the MAIM: $\bm{D}^i_j \coloneqq \dom(D^i)$.
    
    \item  Define $\lambda: \bm{V} \times \bm{V} \rightarrow \bm{D}$ for each $Y$ corresponding to a decision node in the MAIM ($Y \in \bm{V}^1 \cup \dots \cup \bm{V}^n$) and $Z \in \Ch_Y$ such that $\lambda(Y,Z) = \mu(Z)[s_Y]$ to label the outcome of the decision. 
    
    \item Define an equivalence relation $\sim$ over  $\bm{V}^1 \cup \dots \cup \bm{V}^n$ such that $Y \sim Z$ if and only if $S_Y = S_Z = S$ and $\mu(Y)[\pa_{S_Y}] = \mu(Z)[\pa_{S_Z}]$. Then the set of information sets $I^i$ for each player $i \in \bm{N}$ is simply the quotient set $\bm{V}^i/\sim$ -- the set of $\sim$ equivalence classes partitioning $\bm{V}^i$. In other words, two nodes $Y, Z \in \bm{V}$ which correspond to MAIM decision nodes are in the same information set $I^i_j$ for player $i$ if and only if their instantiation labels, restricted to their MAIM parents, are the same $\mu(Y)[\pa_{S_Y}] = \mu(Z)[\pa_{S_Z}]$.
    
    \item The payoff $U: \bm{L} \rightarrow \mathbb{R}^n$ for each leaf $L\in \bm{L}$ of the EFG's tree is $U(L) = (u^1,\dots,u^n)$, where $u^i \coloneqq \sum_{U^i_j \in \bm{U}^i} \mu(L)[u^i_j]$ and $u^i_j$ is the value for each utility node $U^i_j \in \bm{U}^i$ which is recorded as part of leaf's full instantiation label $\mu(L)$.
\end{itemize}

\subsection{EFG to MAIM}
\label{sec:EFG2MAID}

In this section, we detail the construction, which we refer to as \bfsf{efg2maim}, that underlies our equivalence results. Unlike in the case for \bfsf{maim2efg}, we show how every EFG can be converted to a \emph{unique}, canonical equivalent MAIM. In order to do this, it will be helpful to define the notion of an \emph{intervention set}. Note that in what follows we assume that the EFGs do not include instances of absentmindedness (where a path from the root of the tree to one of its leaves passes through an information set more than once); this restriction is lifted in our discussion of absentmindedness in Appendix \ref{sec:absent}.

\begin{definition}
\label{def:intervention_set}
    An \textbf{intervention set} $J$ in an EFG $\efg$ is either a set of chance nodes or a set of information sets belonging to the same player, such that:
    \begin{itemize}
        \item Each node in the set has the same number of children.
        \item No path from the root of $\efg$ to one of its leaves passes through $J$ more than once.
        \item For any two information sets $I$ and $I'$ in $J$, the same knowledge regarding the paths taken from the root of $\efg$ to $I$ and $I'$ must be available at both $I$ and $I'$.
    \end{itemize}
\end{definition}

The intuition behind this definition is that an intervention set represents a single variable in the real world. For example, while flipping two fair coins independently would result in an EFG with three nodes (excluding utilities), the two children of the root would represent a single event: the flipping of the second coin. In general, without further domain knowledge about the EFG, one cannot tell whether two separate chance nodes or information sets represent the same variable in the real world, and so by default every intervention set in an EFG is a singleton, containing only one chance node or one information set. However, if one does possess this knowledge (as is often the case), then it can allow one to form a more compact version of the equivalent MAIM (as shown by example in Section \ref{sec:EFGtoMAID}). Given an EFG $\efg = (\bm{N}, T, \bm{P}, \bm{D}, \lambda, \bm{I}, U)$ (including intervention sets, such that every non-utility node belongs to a single intervention set) we define an equivalent MAIM $\macid = (\bm{N}, \bm{V}, \bm{E}, \theta)$ as follows:

\begin{itemize}
    \item The set of players $\bm{N}$ remains the same in $\macid$ as in $\efg$.
    
    \item Initialise the MAID's graph $(\bm{V}, \bm{E})$ as $T$, where edges are directed from parents to children.
    \item For each of $\efg$'s chance nodes $V \in \bm{V}^0$, label each outgoing edge from $V$ with a value $v$. Every other node with an outgoing edge is labelled, by $\lambda$, with the decision corresponding to that edge. Thus, let $\dom(V)$ contain the labels of the outgoing edges from $V$.
    \item For each variable $V \in \bm{V}$ let $\rho_V$ be the unique path formed by the sequence of labels from the root $R$ of $\efg$ to the corresponding $\efg$ node for $V$ and let $\rho_V[V']$ denote the label of the outgoing edge from node $V'$ on the path $\rho_V$.
    \item For each information set $I^i_j$ in $\efg$, let $\rho^\prec(I^i_j)$ be the set of paths from $R$ into the nodes of $I^i_j$.
    Next, define a function $\mu : \bm{I} \rightarrow 2^{\bm{V}}$ that maps each information set $I^i_j$ to the set of variables in $\Anc_{I^i_j}$ whose outgoing label is the same in every path in $\rho^\prec(I^i_j)$. Note that by Definition \ref{def:intervention_set}, if information sets $I$ and $I'$ are in the same intervention set, then the nodes whose values are in $\mu(I)$ are the same as the nodes whose values are in $\mu(I')$.
    \item We then consider $\efg$'s chance, decision, and leaf nodes in turn:
    \begin{itemize}
        \item For every outgoing edge with some label $v_j$ from a chance node $V \in \bm{V}^0$, add the label $(v_j \mid \rho_{V_j}) : p$ where $p = P_j(v_j)$.
        \item For every information set $I^i_j$ and each variable $D \in I^i_j$, add the label $(d \mid \rho_D[\mu(I^i_j)] ) : \_$ where $\rho_D[\mu(I^i_j)]$ denotes the labels of $\rho_D$ restricted to those in $\{\rho_D[V] : V \in \mu(I^i_j)\}$ and $\_$ is a placeholder to be parametrised by a decision rule.
        \item For each leaf variable $L \in \bm{L}$ with payoff vector $U(L) = u$, split $L$ into $n$ utility nodes $U^1,\dots,U^n$ (duplicating incoming edges) with labels $(u[i] \mid \rho[L]) : 1$ respectively, where $u[i]$ is the $i^\text{th}$ entry in $u$.
    \end{itemize}
    \item Given these labellings we proceed by merging variables according to the intervention sets:
    \begin{itemize}
         \item Merge each information set $I^i_j$ into a single variable $D_j \in \bm{D}^i$, collecting the labels $(v \mid \rho_{V}) : p$ for each $V \in I^i_j$ and retaining all incoming and outgoing edges.
        \item Begin by adding a directed edge from every $V' \in \Anc_V$ to $V$ for every node $V$. Then for every variable $D_j$ corresponding to an information set $I^i_j$, remove all the incoming edges from variables that are not in $\mu(I^i_j)$.
        \item Merge each group of nodes, collecting their incoming and outgoing labelled edges, that belong to the same intervention set.
        \item Merge any utility nodes $U^i_j$ and $U^i_k$ belonging to the same player $i$ that have the same sets of incoming edges, and collect their labels.
    \end{itemize}
    \item We then let $\bm{V} \coloneqq \bm{X} \cup \bm{D} \cup \bm{U}$ where $\bm{X} = \bm{V}^0$, $\bm{D}$ is the union of variable sets $\bm{D}^i$ defined above, and $\bm{U}$ is the collection of utility variables, also defined above.
    \item $E$ is the set of edges in the graph defined above.
    \item We conclude by defining a set of CPDs, $\Pr(\bm{V} \mid \pa_V)$, for every chance and utility node $\bm{X} \cup \bm{U} \in \bm{V}$. 
    \begin{itemize}
        \item For certain instantiations, $\pa_V$, the CPD over $V$ is undefined because $\pa_V$ does not represent a path through the original game tree $T$. For non-decision variables, this is dealt with by simply adding a null value $\perp$ for every variable in $\bm{X}$ and a value of 0 for every variable in $\bm{U}$.
        \item Recall the labels of the form $(v \mid \rho_V) : p$ and let $lab(V)$ denote the set of $V$'s labels. 
        \begin{itemize}
            \item For each variable $V \in \bm{V} \setminus \bm{D}$, we define $\Pr(V = v\mid \Pa_V = \pa_V) \coloneqq \sum_{(v \mid \pa_V) : p ~\in~ lab(V)} p$.
            \item For any $\pa_V$ such that for all $v \in \dom(V)$, $\Pr(V = v\mid \Pa_V = \pa_V) = 0$, we set $\Pr(V = \perp\mid \Pa_V = \pa_V) = 1$ if $V \in \bm{X}$ and $\Pr(V = 0\mid \Pa_V = \pa_V) = 1$ if $V \in \bm{U}$.
        \end{itemize}
    \end{itemize}
    \item By construction, $\Pr(\bm{X},\bm{U} : \bm{D}) = \prod_{V \in \bm{V} \setminus \bm{D}} \Pr(V \mid \Pa_V)$ therefore forms a partial distribution over the nodes in $\macid$. 
    \item For decision variables, we now see that the only changes in parametrisations of $\pi(D \mid \pa_D)$ that can have any effect on any of the other variables are those that occur under settings $\pa_D$ such that there exists a strategy $\sigma$ and path $\rho$ in $\efg$ capturing all values in $\pa_D$ with $\Pr^\sigma(\rho) > 0$. In other words, the possible parametrisations of $\pi(D \mid \pa_D)$ correspond to parametrisations of $(d \mid \rho_D[\mu(I^i_j)] ) : \_$.
\end{itemize}

\subsection{Absentmindedness}
\label{sec:absent}
 An EFG $\efg$ with absentmindedness has at least one path from the root to some other vertex in $\efg$ that passes through the same information set more than once \cite{piccione1997interpretation}. In an EFG, a behavioural strategy defines a single distribution over decisions per information set; this corresponds to a single decision rule for a decision node in the MAIM. In a non-absentminded game, a behavioural strategy at an information set only corresponds to one \textit{decision instance} (as that information set is encountered only once for any path through the EFG). However, in $\efg$ the player may have to apply a behavioural strategy for a single information set at multiple decision instances.

In our usual transformation \bfsf{efg2maim}, a decision node $D$ in a MAID corresponds to an information set, and thus a single behavioural strategy. Therefore, there is no way for the decision choice to be different for each decision instance. To handle absentmindedness, we can replace $D$ with a set of nodes $D \cup \bm{X}^D$. $D$ is the decision node that represents the behavioural strategy with a decision rule, $\bm{X}^D$ is a set of chance nodes representing decision instances, and there is an edge from $D$ to all $X^D_j \in \bm{X}^D$. $D$'s outgoing edges in the original MAID now flow through some $X^D_j \in \bm{X}^D$, but incoming edges to $D$ remain the same. There is an additional edge $X^D_j \rightarrow X^D_k$ if and only if there is a path in the game tree from the root of $\efg$ that passes through the decision node represented by $X^D_j$ and then the decision node represented by $X^D_k$. Without absentmindedness, there is no reason to do this as $\vert \bm{X}^D \vert = 1$ and so the nodes $D \cup \bm{X}^D$ can be merged; however, with absentmindedness $\vert \bm{X}^D \vert > 1$ and so we must allow for the possibility of the player making different decision choices at different decision instances within the same information set. This can be seen, for example, in Figure \ref{fig:absentminded} where the player may use the same (stochastic) behavioural strategy at both nodes but end up continuing ($c$) at the first junction and exiting ($e$) at the second.

 \begin{figure}[h]
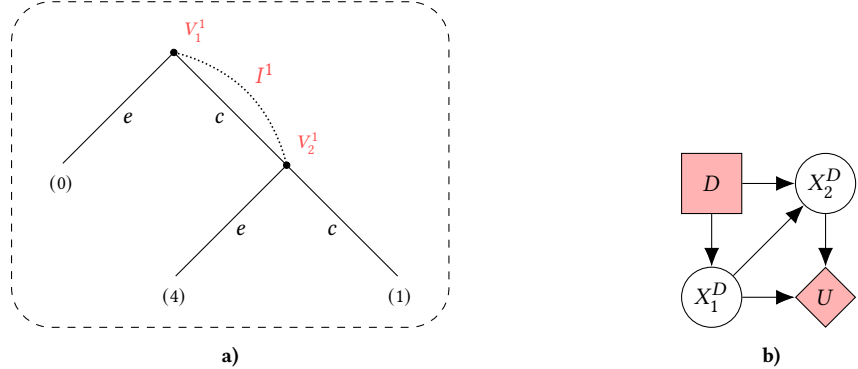

    \centering
    \begin{subfigure}[b]{0.5\linewidth}
        \vspace{0pt}
        \centering
        \begin{istgame}
            \xtdistance{15mm}{30mm}
            \istroot(0)(0,0)<45, red!70>{$V^1_1$}
            \istb{e}[br]{(0)}
            \istb{c}[bl]{}
            \endist
            \istroot(1)(0-2)<45, red!70>{$V^1_2$}
            \istb{e}[br]{(4)}
            \istb{c}[bl]{(1)}
            \endist
            \xtSubgameBox(0){(0)(0-1)(1-1)(1-2)(2-1)(2-2)}[black,inner sep = 18pt]
            \xtCInfoset(0)(1){\textcolor{red!70}{$I^1$}}[above right]
        \end{istgame}
        \caption*{a)}
    \end{subfigure}
    \begin{subfigure}[b]{0.3\linewidth}
        \vspace{0pt}
        \centering
        \begin{influence-diagram}
            \node (D) [decision, player1] {$D$};
            \node (XD2) [right = of D] {$X^D_2$};
            \node (XD1) [below = of D] {$X^D_1$};
            \node (U) [utility, below = of XD2, player1] {$U$};
            \edge {D} {XD1, XD2};
            \edge {XD1} {XD2};
            \edge {XD1, XD2} {U};
        \end{influence-diagram}
        \caption*{b)}
    \end{subfigure}
    \caption{An EFG a) and a MAID b) representing the classic `absentminded driver' game \cite{piccione1997interpretation}. $D$ represents the behavioural strategy for whether the driver should turn or continue at an exit. $X^D_1$ and $X^D_2$ are chance nodes representing the outcomes of the first and second decision instances (the two exits in the same EFG information set).}
    \label{fig:absentminded}
\end{figure}

\subsection{Subgames}
\label{sec:extra_subgames}

In Section \ref{sec:subgames}, we mentioned that a MAID may have more subgames than a corresponding EFG. We illustrate this here with an example in Figure \ref{fig:subgames2}, which shows that the number of MAID subgames can vastly exceed the corresponding EFG subgames.
In the EFG representation shown in Figure \ref{fig:subgames2} c), there are no proper subgames, because $D^2$ does not observe $X$.
However, from the MAID representation shown in Figure \ref{fig:subgames2} a), it is clear that $D^2$ does not need to know $X$ to choose an optimal policy.
Therefore, $D^2\to U^2$ can be factored out as a MAID subgame, shown top left in Figure \ref{fig:subgames2} b).
Similarly, the MAID representation reveals that we can solve the subgame corresponding to the $D^1$ subtrees independently, shown bottom left in Figure \ref{fig:subgames2} b).
More surprisingly, perhaps, MAID subgames need not be closed under descendants. Indeed, in the example in Figure \ref{fig:subgames2}, it is possible to solve for $D^1$ without knowing the policy for $D^2$.
This is represented by the MAID subgames in the right column of Figure \ref{fig:subgames2} b).
Additional subgames may of course also occur in much more complex games, in which the computational advantages of being able to flexibly factor the game into subgames are more keenly felt.

\begin{figure}[h]
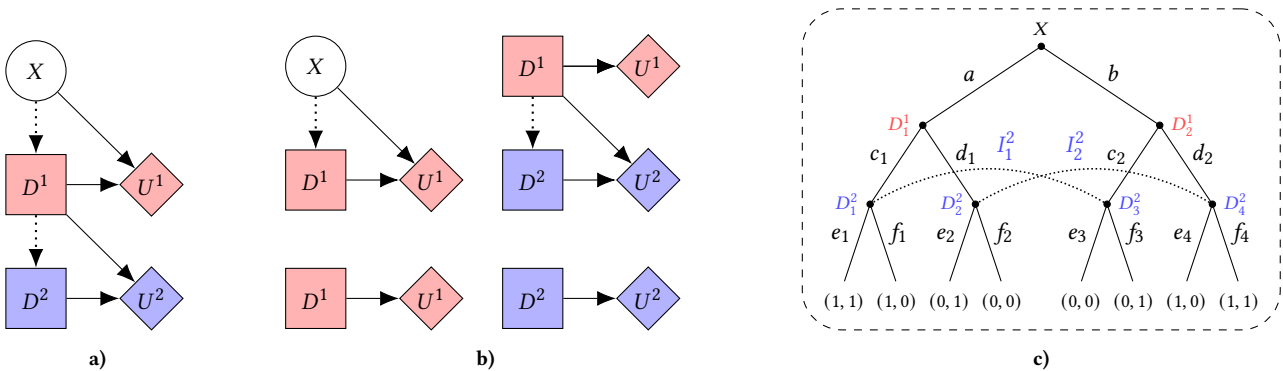

    \begin{subfigure}[b]{0.15\linewidth}
        \vspace{0pt}
        \centering
        \begin{influence-diagram}
        \node (X) {$X$};
        \node (D1) [below = of X, decision, player1] {$D^1$};
        \node (D2) [decision, player2, below = of D1] {$D^2$};
        \node (U1) [utility, player1, right = of D1] {$U^1$};
        \node (U2) [utility, player2, right = of D2] {$U^2$};
        \edge[information] {X} {D1};
        \edge {X, D1} {U1};
        \edge[information] {D1} {D2};
        \edge {D1, D2} {U2};
        \end{influence-diagram}
        \caption*{a)}
    \end{subfigure}
    \hspace{1cm}
    \begin{subfigure}[b]{0.3\linewidth}
        \vspace{0pt}
        \centering
        \begin{influence-diagram}
        \node (X) {$X$};
        \node (D1) [below = of X, decision, player1] {$D^1$};
        \node (U1) [utility, player1, right = of D1] {$U^1$};
        \edge[information] {X} {D1};
        \edge {X, D1} {U1};
        \node [phantom, below = of D1] {}; 
        \end{influence-diagram}
        \hspace{2mm}
        \begin{influence-diagram}
        \node (D1) [decision, player1] {$D^1$};
        \node (D2) [decision, player2, below = of D1] {$D^2$};
        \node (U1) [utility, player1, right = of D1] {$U^1$};
        \node (U2) [utility, player2, right = of D2] {$U^2$};
        \edge {X, D1} {U1};
        \edge[information] {D1} {D2};
        \edge {D1, D2} {U2};
        \node [phantom, below = of D2] {};
        \end{influence-diagram}
        
        \begin{influence-diagram}
        \node (D1) [decision, player1] {$D^1$};
        \node (U1) [utility, player1, right = of D1] {$U^1$};
        \edge {D1} {U1};
        \end{influence-diagram}
        \hspace{2mm}
        \begin{influence-diagram}
        \node (D2) [decision, player2] {$D^2$};
        \node (U2) [utility, player2, right = of D2] {$U^2$};
        \edge {D2} {U2};
        \end{influence-diagram}
        \caption*{b)}
    \end{subfigure}
    \hspace{1cm}
    \begin{subfigure}[b]{0.4\linewidth}
        \vspace{0pt}
        \centering
        \begin{istgame}[scale=0.7]
        % Level 1
        \xtdistance{15mm}{45mm}
        \istroot(0){$X$}
        \istb{a}[al]
        \istb{b}[ar] 
        \endist
        % Level 2
        \xtdistance{15mm}{20mm}
        \istroot(1)(0-1)<180, red!70>{$D^1_1$}
        \istb{c_1}[al]
        \istb{d_1}[ar] 
        \endist
        \istroot(2)(0-2)<0, red!70>{$D^1_2$}
        \istb{c_2}[al]
        \istb{d_2}[ar] 
        \endist
        % Level 3
        \xtdistance{15mm}{10mm}
        \istroot(3)(1-1)<180, blue!70>{$D^2_1$}
        \istb{e_1}[al]{(1,1)}
        \istb{f_1}[ar]{(1,0)} 
        \endist
        \istroot(4)(1-2)<180, blue!70>{$D^2_2$}
        \istb{e_2}[al]{(0,1)}
        \istb{f_2}[ar]{(0,0)} 
        \endist
        \istroot(5)(2-1)<0,blue!70>{$D^2_3$}
        \istb{e_3}[al]{(0,0)}
        \istb{f_3}[ar]{(0,1)} 
        \endist
        \istroot(6)(2-2)<0,blue!70>{$D^2_4$}
        \istb{e_4}[al]{(1,0)}
        \istb{f_4}[ar]{(1,1)} 
        \endist
        \xtCInfoset(3)(5){\textcolor{blue!70}{$I^2_1$}}[above right]
        \xtCInfoset(4)(6){\textcolor{blue!70}{$I^2_2$}}[above left]
        % \xtSubgameBox(1){(1)(3-1)(3-2)(4-1)(4-2)}[black,inner sep = 10pt, xshift=0pt, yshift=-5pt]
        % \xtSubgameBox(2){(2)(5-1)(5-2)(6-1)(6-2)}[black,inner sep = 10pt, xshift=0pt, yshift=-5pt]
        \xtSubgameBox(0){(0)(1)(3-1)(3-2)(4-1)(4-2)(2)(5-1)(5-2)(6-1)(6-2)}[black,inner sep = 15pt, xshift=0pt, yshift=-2pt]
        \end{istgame}
        \caption*{c)}
    \end{subfigure}
    \caption{A MAID a) with four proper subgames b), none of which can be recognised in the EFG c). Note that each MAID subgame will have multiple instantiations (i.e.\ multiple MAIM subgames) depending on the values of the removed nodes.
    }
    \label{fig:subgames2}
\end{figure}

\section{Proofs}
\label{sec:proofs}

\begin{lemma}
    If $\efg\in\bfsf{maim2efg}(\macid)$ or
    $\macid =\bfsf{efg2maim}(\efg)$ then $\efg$ and $\macid$ are equivalent.
\end{lemma}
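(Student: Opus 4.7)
The plan is to build the natural mapping $f$ directly from the two transformation procedures and then verify the two clauses of Definition \ref{def:equiv} component-wise. I will treat the direction $\macid = \bfsf{efg2maim}(\efg)$ first and then note that the other direction is symmetric. By construction, each information set $I^i_j$ of $\efg$ is collapsed into a single decision node $D\in \bm{D}^i$ of $\macid$ whose MAID-parents are exactly the variables in $\mu(I^i_j)$, and each outgoing edge carries a label $(d\mid \rho_D[\mu(I^i_j)]) : \_$. Given a behavioural strategy $\sigma$, I define $f(\sigma)$ to be the class of policies $\pi$ such that $\pi_D(d\mid \pa_D) = \sigma^i_j(d)$ whenever $\pa_D$ matches some $\rho_D[\mu(I^i_j)]$ arising from an actual path into $I^i_j$; on every other parent instantiation, $\pi_D$ is left free.

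The first thing to check is that two such policies really differ only on null decision contexts. A parent instantiation $\pa_D$ that does not arise from any path of $\efg$ forces $\Pr^{\pi}(\pa_D)=0$ for every $\pi$, because the chance CPDs are copied verbatim from $\bm{P}$ and the fall-through values $\perp$ (for $\bm{X}$) and $0$ (for $\bm{U}$) appear with probability $1$ at every unreachable context, so such contexts are either infeasible or trivially null in the sense of Definition \ref{def:feasibleDC}. Conversely, any two policies agreeing on every feasible context coincide on the behavioural strategies they induce. This makes $f$ well-defined, injective, and surjective.

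Next, I would prove the utility equality by a topological-order induction establishing that $\Pr^\pi(\bm{v}) = P^\sigma(\rho)$ for the assignment $\bm{v}$ induced along each root-to-leaf path $\rho$ of $\efg$; the inductive step uses only that chance CPDs are copied from $\bm{P}$ and that decision CPDs agree with $\sigma$ on every reachable context. Summing over paths and invoking the construction that splits each leaf payoff $U(L)$ into per-player utility labels $(u[i]\mid \rho[L]) : 1$ yields
\[
\mathcal{U}^i_{\macid}(\pi) \;=\; \sum_{U^i_k \in \bm{U}^i}\sum_{u} u\,\Pr^\pi(U^i_k=u) \;=\; \sum_\rho P^\sigma(\rho)\, U(\rho[\bm{L}])[i] \;=\; \mathcal{U}^i_{\efg}(\sigma).
\]

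The direction $\efg\in \bfsf{maim2efg}(\macid)$ is analogous: fix the topological order $\prec$ used in the construction, identify information sets with equivalence classes of tree-nodes sharing the same restriction $\mu(\cdot)[\pa_{S_Y}]$, and read off a behavioural strategy from $\pi_D$ at each corresponding MAID decision node; branches removed because $P_Y(Z)=0$ are exactly what the null-context equivalence $\sim$ absorbs. The main obstacle in both directions is the careful bookkeeping of these null (infeasible or zero-utility) decision contexts: one must verify that the contexts on which policies in the same class are allowed to disagree are \emph{exactly} those that do not influence any $\mathcal{U}^i_{\macid}$, and that every non-null context in $\macid$ corresponds to a unique information set of $\efg$. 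Once this correspondence is pinned down, both clauses of Definition \ref{def:equiv} follow directly from the explicit constructions of \bfsf{efg2maim} and \bfsf{maim2efg}.
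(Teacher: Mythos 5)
Your proposal is correct and follows essentially the same route as the paper's own proof: both build the natural mapping directly from the constructions by identifying information sets with non-null decision contexts, handle the extra policies introduced by unreachable/null contexts via the equivalence relation $\sim$, and conclude the utility equality from the fact that chance CPDs and decision rules induce the same distribution over paths. Your topological-order induction simply makes explicit the step the paper summarises as ``by reasoning analogously about the chance and utility nodes.''
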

\begin{proof}

    In what follows we make the trivial assumption that any non-leaf node $V$ in $\efg$ has more than one child (otherwise we could simply remove such nodes), and that if $V$ is a chance node then all of its children are reached with positive probability (otherwise we could delete the sub-trees rooted at such children). The proof follows directly by construction using our procedures \bfsf{maim2efg} and \bfsf{efg2maim} respectively. 
    
    To see this, first suppose we have a MAIM, $\macid$, and $\efg$ is an EFG resulting from $\text{\bfsf{maim2efg}}(\macid)$. A decision rule $\pi_D$ defines a probability distribution over $\dom(D)$ conditional on some feasible decision context, $\pa_D$. Following the \bfsf{maim2efg} procedure, each feasible decision context is an instantiation of a set of variables which corresponds to one information set $I_j^i$ where $S_Y = D$ for all $Y \in I_j^i$, and for each $d \in \dom(D)$ there exists precisely one node $Z \in \Ch_Y$ such that $\lambda(Y,Z) = d$. A policy $\pi^i$ for player $i$ is simply the set of decision rules for all $D \in \bm{D}^i$. Thus, by construction, there is a one-to-one correspondence between each behavioural strategy $\sigma^i$ in $\efg$ and policy $\pi^i$ in $\macid$, where $\sigma^i_j(d) \coloneqq \pi_D(d\mid\pa_D)$. By reasoning analogously about the chance and utility nodes, it follows from this and our construction above that the expected utility for each player $i$ under a policy profile $\pi$ in $\macid$ is the same as the expected utility in $\efg$ under $\sigma$, $\mathcal{U}^i_\macid(\pi) = \mathcal{U}^i_\efg(\sigma)$.

    In the second direction, note first that the deterministic nature of the procedure guarantees uniqueness of the resulting MAIM. Then, suppose we have an EFG $\efg$ and $\macid$ is the MAIM that results from $\text{\bfsf{efg2maim}}(\efg)$. In our construction above, the incoming edges for each decision variable $D_j \in \bm{D}^i$ are precisely those that originate in the variables whose values player $i$ can determine when in the corresponding information set(s) $I^i_j$. Hence the strategy $\sigma^i_j$, which assigns a probability distribution over the set of available decisions $\bm{D}^i_j$ at node $V^i_j$, is determined as a function of $\pa_D$, where $\pa_D$ corresponds to a particular information set $I^i_j$ from which $D_j$ in the MAIM was created. Thus, given a strategy $\sigma^i$ in $\efg$ we define the \emph{corresponding} policy $\pi^i$ in $\macid$ by $\pi^i(d\mid\pa_D) \coloneqq \sigma^i_j(d)$. Note that unlike in the case for \bfsf{maim2efg}, there may be more possible policy profiles in $\macid$ compared to strategy profiles in $\efg$ because of the possibility of decision contexts (involving null values $\perp$) corresponding to impossible paths through the game tree. In other words, via our construction \bfsf{edf2maim} these decision contexts are \emph{null}. Hence, although the policy space may technically be larger, a player's decision rule under such contexts has no impact on their expected utility. Once again, therefore, we have that $\mathcal{U}^i_\macid(\pi) = \mathcal{U}^i_\efg(\sigma)$ for any policy $\pi$ corresponding to a strategy $\sigma$.
\end{proof}

Before giving proofs of the remaining results, we provide the relevant definitions of the corresponding equilibrium refinements in EFGs.

\begin{definition}
\label{def:EFGNE}
    A strategy profile $\sigma$ is a \textbf{Nash equilibrium (NE)} in an EFG $\efg$ if, for every player $i \in \bm{N}$, $\mathcal{U}^i_\efg(\sigma^{-i}, \sigma^i) \geq \mathcal{U}^i_\efg(\sigma^{-i}, \hat{\sigma}^i)$ for all $\hat{\sigma}^i \in \Sigma^i$.
\end{definition}

\begin{definition}
    A strategy profile $\sigma$ is a \textbf{subgame perfect equilibrium (SPE)} in an EFG $\efg$ if $\sigma$ is a NE in every EFG subgame of $\efg$, when restricted to the decision nodes in each EFG subgame.
\end{definition}

\begin{definition}
    A behavioural strategy profile $\sigma$ is a \textbf{trembling hand perfect equilibrium (THPE)} in an EFG $\efg$ if there is a sequence of perturbation vectors $\{\delta_k\}_{k\in\mathbb{N}}$ such that $\lim_{k \rightarrow \infty}\vert\delta_k\vert_\infty = 0$ and for each perturbed game $\efg(\delta_k)$ there exists an NE $\sigma_k$ such that $\lim_{k \rightarrow \infty} \sigma_k = \sigma$.  In $\efg$, a perturbation vector $\delta_k$ is a sequence of perturbations $\epsilon^d_j > 0$ with $\sum_{d \in D^i_j} \epsilon^d_j \leq 1$ for every information set $I^i_j$ such that in the perturbed game $\efg(\delta_k)$, each decision $d$ available at $I^i_j$ is played with probability at least $\epsilon^d_j$ (i.e., each strategy profile $\sigma_k$ is fully mixed).
\end{definition}

\begin{corollary}
    If $\efg\in\bfsf{maim2efg}(\macid)$ or $\macid = \bfsf{efg2maim}(\efg)$ then then there is a natural mapping $f$ between $\efg$ and $\macid$ such that $\sigma$ is an NE in $\efg$ if and only if any $\pi \in f(\sigma)$ is an NE in $\macid$.
\end{corollary}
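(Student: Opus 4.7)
The plan is to reduce this corollary to Lemma \ref{lem:correspondence} by exploiting two structural features of the natural mapping $f$ that are clear from the construction of \bfsf{maim2efg} and \bfsf{efg2maim}: first, that $f$ decomposes per-player (each behavioural strategy $\sigma^i$ corresponds to a decision rule $\pi^i$ at the matching player's decision nodes in $\macid$), and second, that $f$ preserves expected utilities, so that $\mathcal{U}_{\efg}^i(\sigma) = \mathcal{U}_{\macid}^i(\pi)$ for every $\pi\in f(\sigma)$ and every player $i\in \bm{N}$. Since Nash equilibrium is defined purely in terms of one-player utility comparisons, both implications should follow from these two observations together with the fact (implicit in Definition \ref{def:equiv}) that deviations on null decision contexts leave expected utilities unchanged.

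For the forward direction, I would fix a natural mapping $f$ guaranteed by Lemma \ref{lem:correspondence}, suppose $\sigma$ is an NE in $\efg$, and take an arbitrary $\pi\in f(\sigma)$. Assume for contradiction that $\pi$ is not an NE in $\macid$: some player $i$ has a deviation $\hat\pi^i\in\Pi^i$ with $\mathcal{U}_{\macid}^i(\hat\pi^i,\pi^{-i}) > \mathcal{U}_{\macid}^i(\pi^i,\pi^{-i})$. Because every policy profile lies in some equivalence class in the image of $f$, and because the per-player decomposition of $f$ ensures $\pi^{-i}$ remains the $(-i)$-component of the corresponding EFG strategy profile, the profile $(\hat\pi^i,\pi^{-i})$ belongs to $f(\hat\sigma^i,\sigma^{-i})$ for a unique $\hat\sigma^i\in \Sigma^i$. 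Applying the utility-preservation clause of Definition \ref{def:equiv} to both $(\hat\pi^i,\pi^{-i})$ and $(\pi^i,\pi^{-i})$ then gives $\mathcal{U}_{\efg}^i(\hat\sigma^i,\sigma^{-i}) > \mathcal{U}_{\efg}^i(\sigma^i,\sigma^{-i})$, contradicting the NE property of $\sigma$.

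The reverse direction proceeds symmetrically: if $\pi\in f(\sigma)$ is an NE in $\macid$ but $\sigma$ is not an NE in $\efg$, some $\hat\sigma^i$ yields strictly higher utility against $\sigma^{-i}$, and pulling back through $f$ produces a policy $\hat\pi^i$ witnessing that $\pi$ was not in fact a best response, again a contradiction. Crucially, the claim "any $\pi\in f(\sigma)$ is an NE" is consistent across representatives: two policy profiles in the same equivalence class differ only on null decision contexts, and by Definition \ref{def:feasibleDC} any unilateral modification on a null context is matched in utility, so the set of best-response policies is closed under $\sim$.

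The main obstacle, and the only genuinely delicate point, is justifying that an arbitrary deviation $\hat\pi^i\in\Pi^i$ in $\macid$ actually has an EFG counterpart $\hat\sigma^i$ via $f$, given that \bfsf{efg2maim} can introduce additional null decision contexts with no EFG analogue. The resolution is that the per-player bijection induced by $f$ pairs strategies in $\Sigma^i$ with equivalence classes of decision rules in $\Pi^i/{\sim}$, and by the null-context clause of Definition \ref{def:equiv} we may, without loss of generality, replace $\hat\pi^i$ by a representative of its equivalence class that agrees on non-null contexts with some $\hat\sigma^i\in\Sigma^i$; this replacement preserves the expected utility of the deviation, so the contradiction argument still goes through.
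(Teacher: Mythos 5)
Your proposal is correct and follows essentially the same route as the paper's own proof: invoke Lemma \ref{lem:correspondence} to obtain the natural mapping, use utility preservation and the per-player correspondence of deviations to transfer best-response conditions in both directions, and note that utilities are constant across a $\sim$-equivalence class so the NE property is representative-independent. The only difference is presentational — you argue by contradiction and spell out the null-decision-context subtlety explicitly, whereas the paper states the argmax correspondence directly — but the underlying argument is the same.
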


\begin{proof}
    By Lemma \ref{lem:correspondence} we have that $\efg$ and $\macid$ are equivalent. Thus, let $f : \Sigma \rightarrow \Pi / \sim$ be a natural mapping  between $\efg$ and $\macid$ as defined in Definition \ref{def:equiv}.
    Next note that for any $\sigma$ and any $\pi,\pi' \in f(\sigma)$ we have $\mathcal{U}^i_{\macid}(\pi) = \mathcal{U}^i_{\macid}(\pi')$ for every player $i$. Finally, observe that by the equivalence between $\efg$ and $\macid$ we have that for any $\sigma^{-i}$ then $\sigma^i \in \argmax_{\hat{\sigma}^i} \mathcal{U}^i_{\efg}(\sigma^{-i},\hat{\sigma}^i)$ if and only if $\pi^i \in \argmax_{\hat{\pi}^i} \mathcal{U}^i_{\macid}(\pi^{-i},\hat{\pi}^i)$ where $\pi \in f(\sigma)$, and hence that $\sigma$ is an NE if and only if every $\pi \in f(\sigma)$ is an NE.
\end{proof}

\begin{proposition}
    If $\efg\in\bfsf{maim2efg}(\macid)$ or $\macid = \bfsf{efg2maim}(\efg)$ then there is a natural mapping $f$ between $\efg$ and $\macid$ such that, for every EFG subgame $\efg'$ in $\efg$ there is a MAIM subgame $\macid'$ in $\macid$ that is equivalent to $\efg'$ under the natural mapping $f$ restricted to the strategies of $\efg'$.
\end{proposition}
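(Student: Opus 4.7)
The plan is to use Lemma \ref{lem:correspondence} to fix a natural mapping $f$ between $\efg$ and $\macid$, and then for each EFG subgame $\efg'$ of $\efg$, exhibit a subgame base $\bm{V'}\subseteq \bm{V}$ in $\macid$ whose associated MAIM subgame $\macid'$ mirrors $\efg'$. I would take $\bm{V'}$ to consist of (i) every MAID decision node whose corresponding information set lies in $\efg'$, (ii) every MAID chance node whose intervention set lies in $\efg'$, and (iii) every MAID utility node whose defining leaves lie in $\efg'$. Since $\efg'$ is closed under information sets, this collection is well-defined under both constructions \bfsf{maim2efg} and \bfsf{efg2maim}.

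First, I would verify that $\bm{V'}$ is a subgame base. Closure under directed paths follows from the descendant-closure of $\efg'$: any directed edge $X\to Y$ in $\macid$ with $X,Y\in \bm{V'}$ traces, via the construction, a dependency that corresponds to a descendant relation in the EFG, so every intermediate node must also correspond to something in $\efg'$. Closure under r-reachability is the key technical hurdle: given $D\in \bm{V'}$ and $Z$ r-reachable from $D$, I must show $Z\in \bm{V'}$. The argument begins by noting that any utility node in $\bm{U}^i\cap \Desc_D$ lies in $\bm{V'}$, since its defining leaves are descendants of $D$'s information set in $\efg$ and $\efg'$ is descendant-closed. If a newly added parent $\hat Z$ of $Z$ satisfies $\hat Z\not\perp \bm{U}^i\cap \Desc_D \mid \Fa_D$, then there exists an active path from $Z$ to one of these utility nodes not blocked by $\Fa_D$; this path witnesses a chance or decision variable in $\efg$ whose outcome could influence the payoff vectors reached from within $\efg'$, forcing $Z$ to correspond to a variable sitting in $\efg'$ by the descendant- and information-set-closure of $\efg'$.

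Next, I would specify the instantiation $\bm{y}$ of nodes outside $\bm{V'}$ that have children in $\bm{V'}$. These correspond to the chance outcomes and decisions along the unique path from the EFG root to the root $R'$ of $\efg'$; their values are fixed by the history leading to $\efg'$ being reached, witnessing feasibility of the MAIM subgame. The CPDs of $\macid'$ are then obtained as the restrictions described in Definition \ref{def:MACIDsubgame}.

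Finally, I would establish equivalence via the restriction of $f$ to the strategies of $\efg'$. Strategies in $\efg'$ are assignments of distributions to information sets in $\efg'$, which are in bijection with decision rules for the decisions in $\bm{V'}$, and the equivalence relation $\sim$ outside $\efg'$ becomes irrelevant when restricting the domain of $f$ in this way. Applying the utility equivalence from Lemma \ref{lem:correspondence} inside the restricted game, where fixing $\bm{y}$ aligns the path distribution in $\efg'$ with the joint distribution $\Pr^\pi$ induced on $\macid'$, yields $\mathcal{U}^i_{\efg'}(\sigma) = \mathcal{U}^i_{\macid'}(\pi)$ for each player $i$ and each $\pi\in f(\sigma)$. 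I expect the r-reachability closure step to be the main obstacle, as it is the only place where the graphical criterion interacts nontrivially with the descendant-closure property of EFG subgames rather than following directly from the bijective construction.
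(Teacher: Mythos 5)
Your overall architecture matches the paper's proof: you choose essentially the same subgame base $\bm{V}'$ (the intervention sets meeting $\efg'$), the same instantiation $\bm{y}$ read off the path from the root of $\efg$ to the root of $\efg'$, and you close with the same appeal to the Lemma-\ref{lem:correspondence}-style correspondence to get $\mathcal{U}^i_{\efg'}(\sigma) = \mathcal{U}^i_{\macid'}(\pi)$. The directed-path closure argument and the treatment of null decision contexts are also as in the paper.

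However, there is a genuine gap at exactly the step you flag as the main obstacle: closure under r-reachability. Your sketch asserts that an active path from $\hat Z$ to $\bm{U}^i \cap \Desc_D$ given $\Fa_D$ ``witnesses a variable whose outcome could influence the payoffs reached from within $\efg'$, forcing $Z$ to correspond to a variable sitting in $\efg'$.'' That inference is false as stated: variables outside $\efg'$ routinely do influence the payoffs reachable inside $\efg'$ (in Example \ref{ex:taxis}, $D^1$ influences $U^2$ yet lies outside the proper subgame). The reason such a $Z$ is nonetheless not r-reachable is not that it cannot affect the payoffs, but that the active path is \emph{blocked} by conditioning on $\Fa_D$. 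The paper's argument is contrapositive and hinges on a d-separation case analysis that your sketch omits: for $Z \notin \bm{V}'$, any path $\hat Z \to Z \cdots \bm{U}^i \cap \Desc_D$ either leaves $Z$ along a child edge, in which case closure of $\efg'$ under information sets forces $Z$ to be observed by $D$ (so $Z \in \Fa_D$ and the chain is blocked), or leaves $Z$ along a parent edge, in which case one takes the \emph{first fork node} $X$ on the path, argues $X$ also lies outside $\efg'$ (it has a directed path to $Z$), and again uses information-set closure to place $X$ in $\Fa_D$, blocking the fork. Without this identification of the blocking node --- and in particular the fork-node construction for paths that initially run backwards --- the r-reachability closure is not established, and the claimed subgame base may fail to be one. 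The rest of your proof is sound once this step is repaired along the paper's lines.
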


\begin{proof}
    We begin by proving existence. Recall that a subgame $\efg'$ in an EFG $\efg$ is a subtree that is closed under information sets and descendants. Let $\bm{V}'$ be the set of intervention sets overlapping with or contained in $\efg'$, which are therefore a subset of the variables $\bm{V}$ in the equivalent MAIM $\macid$. We first show that $\bm{V}'$ forms a subgame base, i.e. that $\bm{V}'$ is closed under r-reachability in $\macid$, and for any $X,Y\in \bm{V}'$ and any directed path $X \to \dots \to Y$ in $\macid$, all nodes on the path are also in $\bm{V'}$. Beginning with the first condition, it suffices to show that there exists no variable $Y \in \bm{Y} = \bm{V} \setminus \bm{V}'$ such that $Y$ is r-reachable from some node $D' \in \bm{D'} = \bm{D} \cap \bm{V}'$: recall that this means we would have $\hat{Y} \not\perp \bm{U}^i \cap \Desc_{D'} \mid \Fa_{D'}$, where $\hat{Y}$ is an extra parent node added to $Y$ in $\macid$ and we have $D' \in \bm{D}^i$ for some player $i$. Any path supporting such a dependency must have one of the following two forms:
    \begin{itemize}
        \item $\hat{Y} \rightarrow Y \rightarrow \cdots~ \bm{U}^i \cap \Desc_{D'}$. In this case, as $Y \notin \bm{V'}$     then, by either construction \bfsf{efg2maim} or \bfsf{maim2efg}, any node in $\efg$ corresponding to $Y$ must lie outside $\efg'$.
        Further, as $\efg'$ is closed under information sets, then the value of $Y$ must be observed by any decision node in $\efg'$ corresponding to $D'$, and thus there is an information link $Y \rightarrow D'$ in $\macid$ and so $Y \in \Fa_{D'}$. Hence, by conditioning on $\Fa_{D'}$ we block the path above, meaning there is no dependency.
        \item $\hat{Y} \rightarrow Y \leftarrow \cdots~ \bm{U}^i \cap \Desc_{D'}$. In this case, let $X$ be the first variable in the path (from left to right) that is a fork node, i.e., we have $\cdots \leftarrow X \rightarrow \cdots$. Such a variable must exist because we assume that utility variables do not have children. 
        As before, notice that by either construction \bfsf{efg2maim} or \bfsf{maim2efg} we have that $Y \notin \bm{V'}$ must lie outside $\efg'$, and as there is a directed path $X \to \dots \to Y$ in $\macid$ then $X$ must also lie outside $\efg'$. But then the fact that $\efg'$ is closed under information sets means that $X$ is observed by $D'$ and so conditioning on $\Fa_{D'}$ blocks the path above, meaning there is no dependency.
    \end{itemize}
    We next consider the second condition. Suppose that $X,Y\in \bm{V}'$ and there is a directed path $X \to \cdots \to Z \to \cdots \to Y$ in $\macid$. Then if $\efg \in \bfsf{maim2efg}(\macid)$ or $\macid = \bfsf{efg2maim}(\efg)$, any topological ordering $\prec$ over the nodes in $\macid$ must have $X \prec Z \prec Y$ and hence if there are nodes in $\efg'$ corresponding to both $X$ and $Y$ then as $\efg'$ is closed under descendants we must have a node corresponding to $Z$ in $\efg'$. This means that the intervention set containing $Z$ overlaps with $\efg'$ and so $Z \in \bm{V}'$ by the definition of $\bm{V}'$. 
    
    For the second part of the existence proof, we show that there is a setting $\bm{y}$ of $\bm{Y}$ which, when combined with $\bm{V'}$, leads to a subgame $\macid'$ of $\macid$ that is equivalent to $\efg'$. First, note that any node passed through on the path $\rho_{R'}$ from the root $R$ of $\efg$ to the root $R'$ of $\efg'$ must correspond to a variable in $\bm{Y}$. Let $\bm{y}$ be a setting of $\bm{Y}$ that is consistent with the path from $R$ to $R'$, and let $\macid'$ be the result MAIM subgame that is obtained by combining $\bm{V'}$ with $\bm{y}$. Observe that by the argument argument above, the decision variables in $\bm{D'} \subseteq \bm{V'}$ are precisely those corresponding to the information sets in $\efg'$, and moreover that any decision context of any $D' \in \bm{D'}$ that does not correspond to some information set in $\efg'$ is, by definition, not feasible and hence null. Thus, the partitioned policy space in $\macid'$ (where two policies belong to the same partition if and only if they differ only on those decision contexts that are null) is in one-to-one correspondence with the strategies in $\efg'$, just as was shown in the proof of Lemma \ref{lem:correspondence}. Further, for any equivalent strategy $\sigma$ in $\efg$ and $\pi$ in $\macid$, it follows directly by construction using \bfsf{maim2efg} or \bfsf{efg2maim} that:
    \begin{align*}
        \mathcal{U}^i_{\efg'}(\sigma) 
        &= \sum_\rho {P'}^\sigma(\rho) U(\rho[\bm{L}])[i]\\
        &= \sum_\rho {P}^\sigma(\rho \mid \rho_{R'}) U(\rho[\bm{L}])[i]\\
        &= \sum_{U_j \in \bm{U}^i}\sum_{u_j \in \dom(U_j)} \!\! u_j {\Pr}^\pi(U_j = u_j \mid \bm{y})\\
        &= \sum_{U_j \in \bm{U}^i}\sum_{u_j \in \dom(U_j)} \!\! u_j {\Pr'}^\pi(U_j = u_j)\\
        &= \mathcal{U}^i_{\macid'}(\pi)
    \end{align*}
    for every player $i$, where ${P}^\sigma(\rho \mid \rho_{R'})$ is the distribution over paths $\rho$ in $\efg$ conditional on $\rho$ containing $\rho_{R'}$ as a sub-path. Hence, not only do we have a one-to-one correspondence $f$ between strategies in $\efg'$ and (a partition over) the policies in $\macid'$, we see that for any $\pi \in f(\sigma)$ we have that $\mathcal{U}^i_{\efg'}(\sigma) = \mathcal{U}^i_{\macid'}(\pi)$ for all $i \in \bm{N'} = \{i \in \bm{N} \mid \bm{D}^i \cap \bm{V'} \neq \varnothing \}$ and hence that $\efg'$ and $\macid'$ are equivalent, and thus that $f$ forms a natural mapping between $\efg'$ and $\macid'$.
\end{proof}

\begin{corollary}
   If $\efg\in\bfsf{maim2efg}(\macid)$ or $\macid = \bfsf{efg2maim}(\efg)$ then there is a natural mapping $f$ between $\efg$ and $\macid$ such that if any $\pi \in f(\sigma)$ is an SPE in $\macid$, then $\sigma$ is an SPE in $\efg$.
\end{corollary}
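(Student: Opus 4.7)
The plan is to chain Proposition \ref{prop:subgames} with Corollary \ref{prop:BE} in order to transfer the NE-in-every-subgame property from $\macid$ down to $\efg$. Fix the natural mapping $f$ between $\efg$ and $\macid$ provided by Lemma \ref{lem:correspondence}, and suppose $\pi \in f(\sigma)$ is an SPE in $\macid$, meaning that $\pi$ (or rather, its restriction to the decisions of any MAIM subgame) induces an NE on every MAIM subgame of $\macid$.

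Next I would take an arbitrary EFG subgame $\efg'$ of $\efg$ and apply Proposition \ref{prop:subgames} to produce a MAIM subgame $\macid'$ of $\macid$ that is equivalent to $\efg'$ under $f$ restricted to the strategies living in $\efg'$; call this restricted mapping $f'$. Let $\sigma'$ denote the restriction of $\sigma$ to the decision nodes of $\efg'$, and $\pi'$ the restriction of $\pi$ to the decision nodes of $\macid'$. The key compatibility claim is that $\pi' \in f'(\sigma')$, which follows directly from the constructions \bfsf{maim2efg} and \bfsf{efg2maim}: under both constructions, the information sets of $\efg'$ are in one-to-one correspondence with the non-null decision contexts of the decisions in $\macid'$, and the assignments made by $\sigma$ and $\pi$ on these information sets and decision contexts agree by definition of the original natural mapping $f$.

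Given this, $\pi'$ is an NE of $\macid'$ by the SPE hypothesis on $\pi$, so applying Corollary \ref{prop:BE} to the equivalent pair $(\efg', \macid')$ with natural mapping $f'$ yields that $\sigma'$ is an NE of $\efg'$. Since $\efg'$ was an arbitrary EFG subgame, we conclude that $\sigma$ induces an NE on every EFG subgame of $\efg$, i.e., $\sigma$ is an SPE of $\efg$.

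The main obstacle is the compatibility step in the middle paragraph: one must check that the natural mapping $f$ from Lemma \ref{lem:correspondence} restricts \emph{cleanly} to a natural mapping between $(\efg', \macid')$ that is compatible with the one implicitly used when asserting that $\pi'$ is an NE of $\macid'$. This is not entirely cosmetic because the restriction of $\pi$ to $\macid'$ could in principle pick up decision contexts that are null in $\macid$ but are feasible in $\macid'$; however, the construction of $\macid'$ via a subgame base ensures that any parent in $\Pa_D$ for a decision $D \in \bm{D}'$ that is fixed by $\bm{y}$ has had its value absorbed into the restricted CPDs, so no new feasible decision contexts are introduced. Finally, it is worth emphasizing that only one implication is claimed here, since by the discussion of extra MAID subgames (Appendix \ref{sec:extra_subgames}) the SPE condition in $\macid$ is strictly stronger than in $\efg$, so the reverse implication need not hold.
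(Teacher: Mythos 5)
Your proposal is correct and follows essentially the same route as the paper's own proof: the paper likewise combines Proposition \ref{prop:subgames} with Corollary \ref{prop:BE}, arguing that since $\pi$ is an NE in every MAIM subgame and every EFG subgame is equivalent to some MAIM subgame under the (restricted) natural mapping, $\sigma$ must be an NE in every EFG subgame. Your additional care over the compatibility of the restricted mapping and your remark that the converse fails because the MAIM may have strictly more subgames are both sound and consistent with the paper's surrounding discussion.
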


\begin{proof}
     The corollary can be seen to follow immediately from combining the definitions of SPEs in EFGs and MAIMs with Proposition \ref{prop:subgames} and Corollary \ref{prop:BE}. Concretely, if $\efg \in \bfsf{maim2efg}(\macid)$ or $\macid = \bfsf{efg2maim}(\efg)$, and if any $\pi \in f(\sigma)$ is an SPE in $\macid$ (for some natural mapping $f$ between $\efg$ and $\macid$ satisfying Proposition \ref{prop:subgames}) then $\pi$ is an NE in any subgame of $\macid$, and as every subgame of $\efg$ is equivalent to a subgame of $\macid$ under $f$, then we must have that $\sigma$ is an NE in each subgame of $\efg$ (by Corollary \ref{prop:BE}).
\end{proof}

\begin{proposition}
    If $\efg\in\bfsf{maim2efg}(\macid)$ or $\macid = \bfsf{efg2maim}(\efg)$ then there is a natural mapping $f$ between $\efg$ and $\macid$ such that $\sigma$ is a THPE in $\efg$ if and only if any $\pi \in f(\sigma)$ is a THPE in $\macid$.
\end{proposition}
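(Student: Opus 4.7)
The plan is to follow the hint given in the text: reduce the THPE equivalence to an equivalence of perturbed games, where we can then invoke Lemma \ref{lem:correspondence} and Corollary \ref{prop:BE}. The natural mapping $f$ will be the same one provided by Lemma \ref{lem:correspondence}.

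First, I would establish a correspondence between perturbation vectors. Given a perturbation vector $\delta_k$ in $\efg$, the perturbations are indexed by pairs $(I^i_j, d)$ with $d \in \bm{D}^i_j$. By the construction in \bfsf{efg2maim} (or \bfsf{maim2efg}), each information set $I^i_j$ corresponds to a non-null decision context $\pa_D$ for the MAID decision node $D$ obtained from $I^i_j$, and the decisions available at $I^i_j$ correspond bijectively to $d \in \dom(D)$. Using this correspondence, define a perturbation vector $\delta'_k$ in $\macid$ by setting $\epsilon^d_{\pa_D} = \epsilon^d_j$ for each non-null decision context, and (say) $\epsilon^d_{\pa_D} = 1/|\dom(D)|$ for every null decision context (these values are arbitrary because they do not affect the induced distribution $\Pr^\pi$, and in particular do not affect any expected utility). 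The converse direction is analogous: given $\delta'_k$ on $\macid$, keep only the values on non-null decision contexts.

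Second, I would verify that $\efg(\delta_k)$ and $\macid(\delta'_k)$ are equivalent in the sense of Definition \ref{def:equiv}. Since perturbing only restricts the space of allowed decision rules/strategies to those satisfying lower bounds on each outcome probability, the same natural mapping $f$ from Lemma \ref{lem:correspondence} restricts to a bijection between the perturbed strategy space of $\efg(\delta_k)$ and the quotient of the perturbed policy space of $\macid(\delta'_k)$. The expected utility equality $\mathcal{U}^i_{\efg(\delta_k)}(\sigma) = \mathcal{U}^i_{\macid(\delta'_k)}(\pi)$ for $\pi \in f(\sigma)$ follows from the same argument as in Lemma \ref{lem:correspondence}, since perturbation changes only the feasible set, not the payoff structure. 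Applying Corollary \ref{prop:BE} to each perturbed game then shows that $\sigma_k$ is an NE in $\efg(\delta_k)$ if and only if every $\pi_k \in f(\sigma_k)$ is an NE in $\macid(\delta'_k)$.

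Third, I would take limits. If $\sigma$ is a THPE in $\efg$, witnessed by a sequence $\{\delta_k\}$ with NEs $\sigma_k \to \sigma$, then by the above the sequence $\{\delta'_k\}$ satisfies $|\delta'_k|_\infty \to 0$ on non-null contexts (and the arbitrary values on null contexts can be taken to zero as well, or simply ignored since they do not affect utilities), and any selection $\pi_k \in f(\sigma_k)$ yields NEs in $\macid(\delta'_k)$; taking a convergent subsequence of $\pi_k$ (which exists by compactness of the policy space) gives a limit $\pi \in f(\sigma)$, which is a THPE in $\macid$. The converse direction is symmetric. The main obstacle is handling the null decision contexts carefully, since THPE fixes behaviour even off the equilibrium path: one must ensure that the freedom to set $\pi_k$ arbitrarily on null contexts does not spuriously create or destroy THPEs, which follows because utilities, and therefore best-response conditions, are insensitive to decision rules on null contexts, so the quotient by $\sim$ in Definition \ref{def:equiv} correctly identifies the relevant equivalence classes.
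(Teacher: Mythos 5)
Your proposal is correct and follows essentially the same route as the paper's proof: reduce to an equivalence of the perturbed games via the entry-by-entry correspondence $\epsilon^d_j \leftrightarrow \epsilon^d_{\pa_D}$, invoke Lemma \ref{lem:correspondence} and Corollary \ref{prop:BE} on each perturbed pair, and observe that the extra perturbation entries on null decision contexts are irrelevant to the NE conditions. You spell out the limiting argument and the treatment of null contexts in more detail than the paper does, but the underlying idea is the same.
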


\begin{proof}
    Given Lemma \ref{lem:correspondence} and Corollary \ref{prop:BE}, it suffices to show that the perturbed MAIM resulting from $\text{\bfsf{efg2maim}}(\efg(\delta_k))$ is equivalent to $\efg(\delta_k)$, and similarly that any perturbed EFG resulting from $\text{\bfsf{maim2efg}}(\macid(\delta'_k))$ is equivalent to $\macid(\delta'_k)$. It can readily be seen from the constructions detailed in Appendices \ref{sec:macid2efg} and \ref{sec:EFG2MAID} respectively that the entry $\epsilon^d_j$ of the perturbation vector $\delta_k$ for $\efg$ corresponds precisely to the entry $\epsilon^d_{\pa_D}$ in the perturbation vector $\delta'_k$ for $\macid$. The only extra entries that $\delta'_k$ may have over $\delta_k$ are those correspond to null decision contexts $\pa_D$, which are are irrelevant for the determination of NEs in the perturbed MAIM.
\end{proof}

\section{Codebase}
\label{sec:codebase}

In this section, we demonstrate the usage of our Python codebase, available at \href{https://github.com/causalincentives/pycid}{\texttt{https://github.com/causalincentives/pycid}}. We provide a number of screenshots of a Jupyter notebook showing how to instantiate MAIDs and MAIMs, plot attributes, and how to find all of the MAIM's SPEs. We first give an overview of some of our codebase's key methods (Appendix \ref{sec:keymethods}) before demonstrating how to instantiate a particular MAID or MAIM (Appendix \ref{sec:insMACID}). We then show how we compute all pure policy SPEs (Appendix \ref{sec:NEcomp}) and explain the usefulness of interfacing with Gambit \cite{mckelvey2006gambit} (Appendix \ref{sec:gambit}).

\subsection{Key Methods} 
\label{sec:keymethods}

The codebase relevant for what is discussed in this paper is centred around a \texttt{MACID} class. A MACID is a multi-agent causal influence diagram, which is a slightly different model than a MAID because the edges represent every causal relationship between the random variables chosen to be endogenous variables in the model. However, because MACIDs subsume MAIDs (in the sense of Pearl's `causal hierarchy' \cite{pearl2009causality}), for the purpose of this discussion, one can interpret this object to be the same as that defined in Definition \ref{def:MAID}. Our \texttt{MACID} class inherits from pgmpy's \texttt{BayesianModel} class \citep{ankan2015pgmpy} because when a policy profile $\pi$ has been specified in a MAID $\macid$, the induced MAID $\macid(\pi)$ is a Bayesian network with joint probability distribution $\Pr^\pi$. Some of the most important methods belonging to this class, and relating to the work in this paper, perform the following operations:

\begin{itemize}
    \item Plot the MAID using NetworkX \citep{hagberg2008exploring}.
    \item Find and plot the MAID's relevance graph (Definition \ref{def:relgraph}) with its maximal SCCs (as in, for example, Figure \ref{fig:road}), and its condensed relevance graph.
    \item Find a MAIM's pure SPEs using Algorithm \ref{algo:SPE}.
    \item Find all MAID and MAIM subgames.
    \item Convert any MAID into an EFG using the procedure in Appendix \ref{sec:macid2efg}, and into a format that can be used with Gambit \citep{mckelvey2006gambit}.
\end{itemize}

Making use of the existing literature on MAIDs and CIDs and to advantage researchers for future work on MAIDs (and MACIDs), we have also implemented methods to:

\begin{itemize}
    \item Determine if various graphical criteria are satisfied such as:
    \begin{itemize}
        \item Whether a node faces incentives in single-agent CIDs \cite{Everitt2021, carey2020incentives}.
        \item Whether a decision node in a MAID faces one of Pfeffer and Gal's reasoning patterns \cite{pfeffer2007reasoning}.
    \end{itemize}
    \item Prune the original MAID's edges based on these reasoning patterns and K\&M's concept of ignorable information \cite{milch2008ignorable}.
\end{itemize}

\vspace{0.75cm}
\begin{minipage}[]{0.6\textwidth}
    \centering
    \includegraphics[scale=0.45]{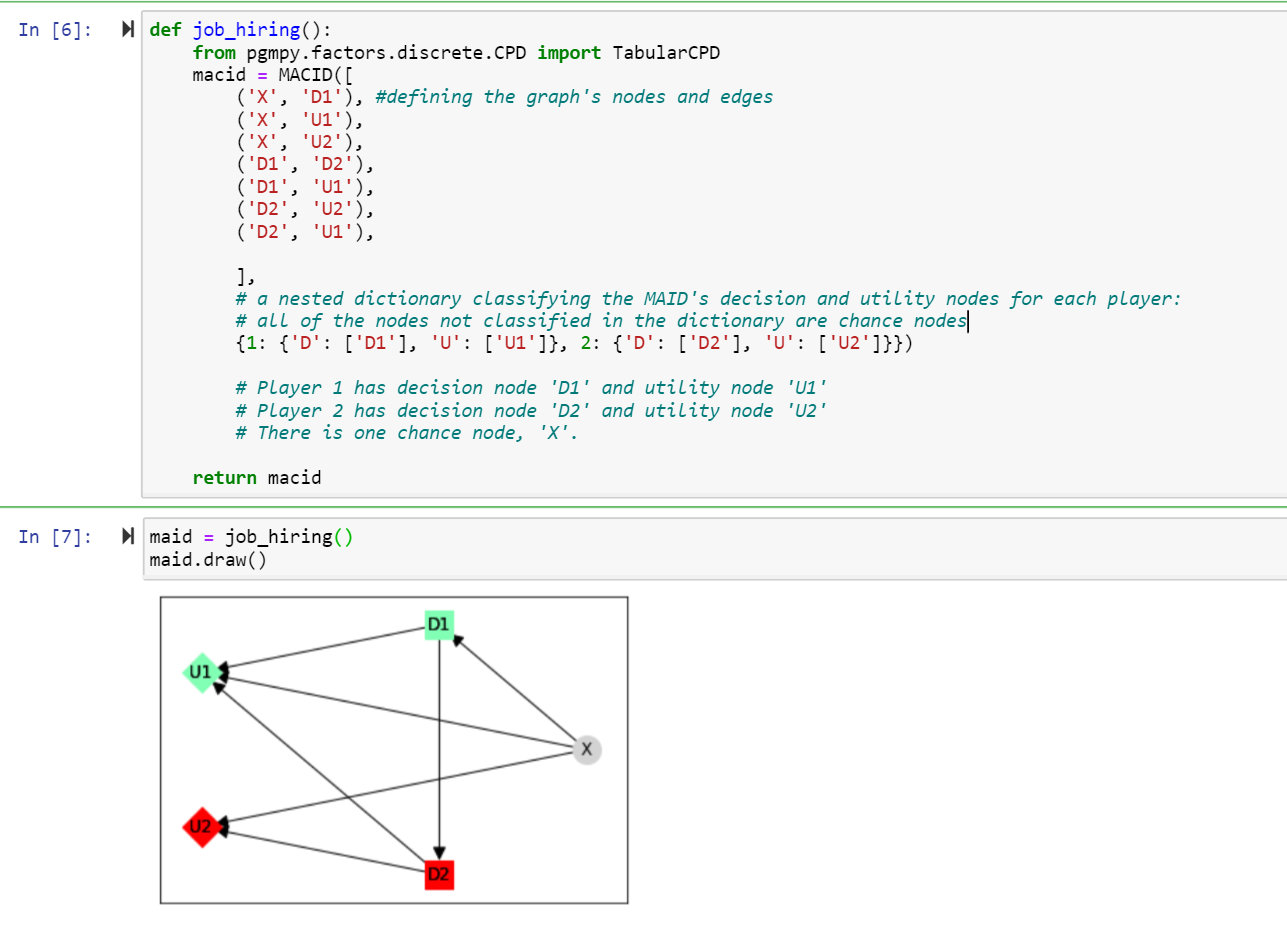}
    \captionof{figure}{Instantiating and plotting a \texttt{job\_hiring} object for Example \ref{ex:hiring}'s MAID as an instance of our MAID class.}
    \label{fig:jobhire}
\end{minipage}
\hspace{0.05\textwidth}
\begin{minipage}[]{0.3\textwidth}
    \centering
    \begin{minipage}{\textwidth}
    \centering
        \includegraphics[scale=0.4]{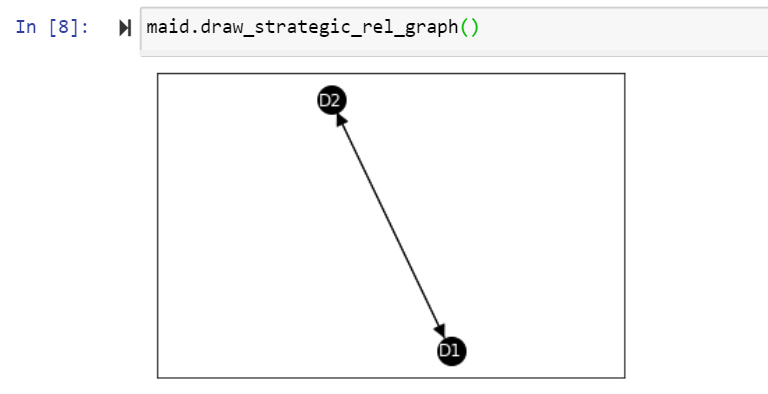}
        \captionof{figure}{The cyclic relevance graph for Example \ref{ex:hiring}'s MAID.}
        \label{fig:jh_rel}
    \end{minipage}
    \vspace{0.15cm}
    
    \begin{minipage}{\textwidth}
        \centering
        \includegraphics[scale=0.3]{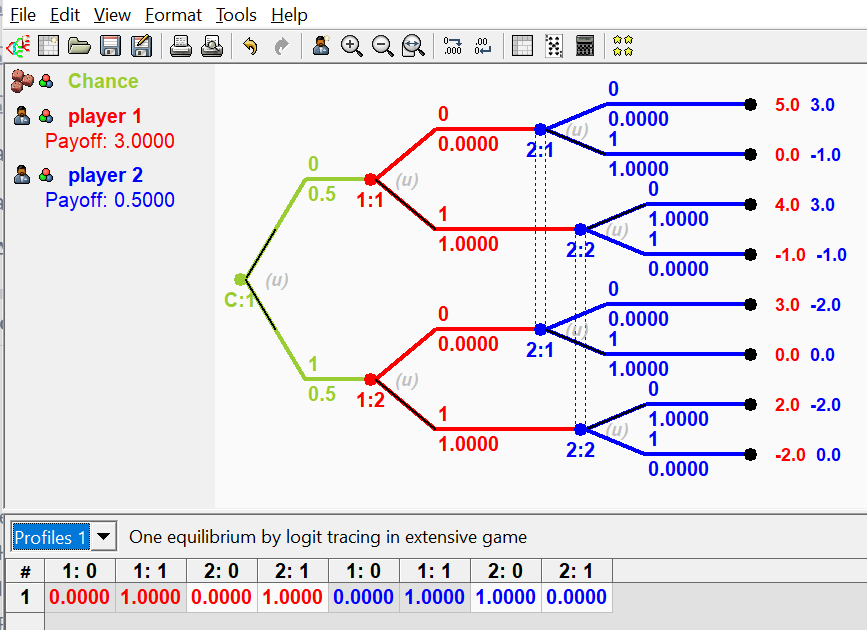}
        \captionof{figure}{Gambit's GUI EFG representation of Example \ref{ex:hiring}.}
        \label{fig:gambit}
    \end{minipage}
\end{minipage}
\vspace{0.75cm}

\vspace{0.5cm}
\begin{minipage}[]{0.45\textwidth}
    \centering
    \includegraphics[scale=0.45]{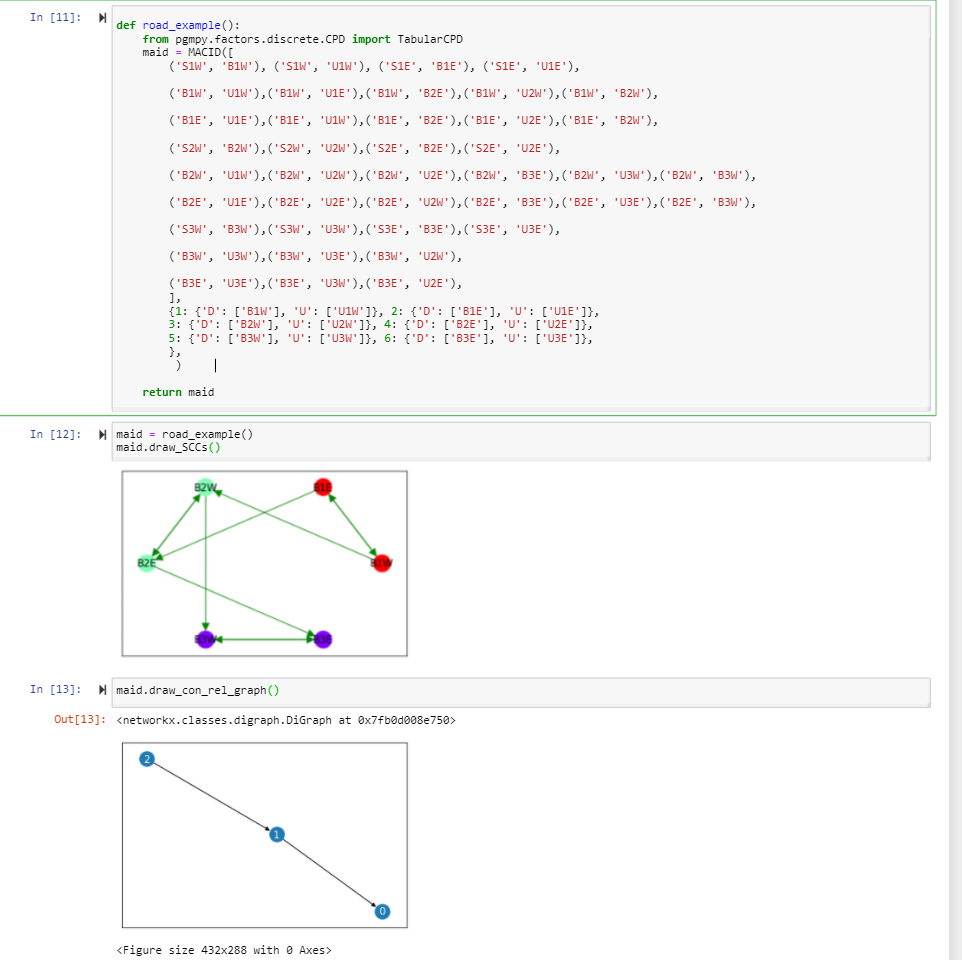}
    \captionof{figure}{Instantiating, plotting the relevance graph (with each maximal SCC coloured uniquely), and plotting the condensed relevance graph for K\&M's `Road Example' \cite{koller2003multi}.}
\label{fig:road}
\end{minipage}
\hspace{0.05\textwidth}
\begin{minipage}[]{0.45\textwidth}
    \centering
    \includegraphics[scale=0.45]{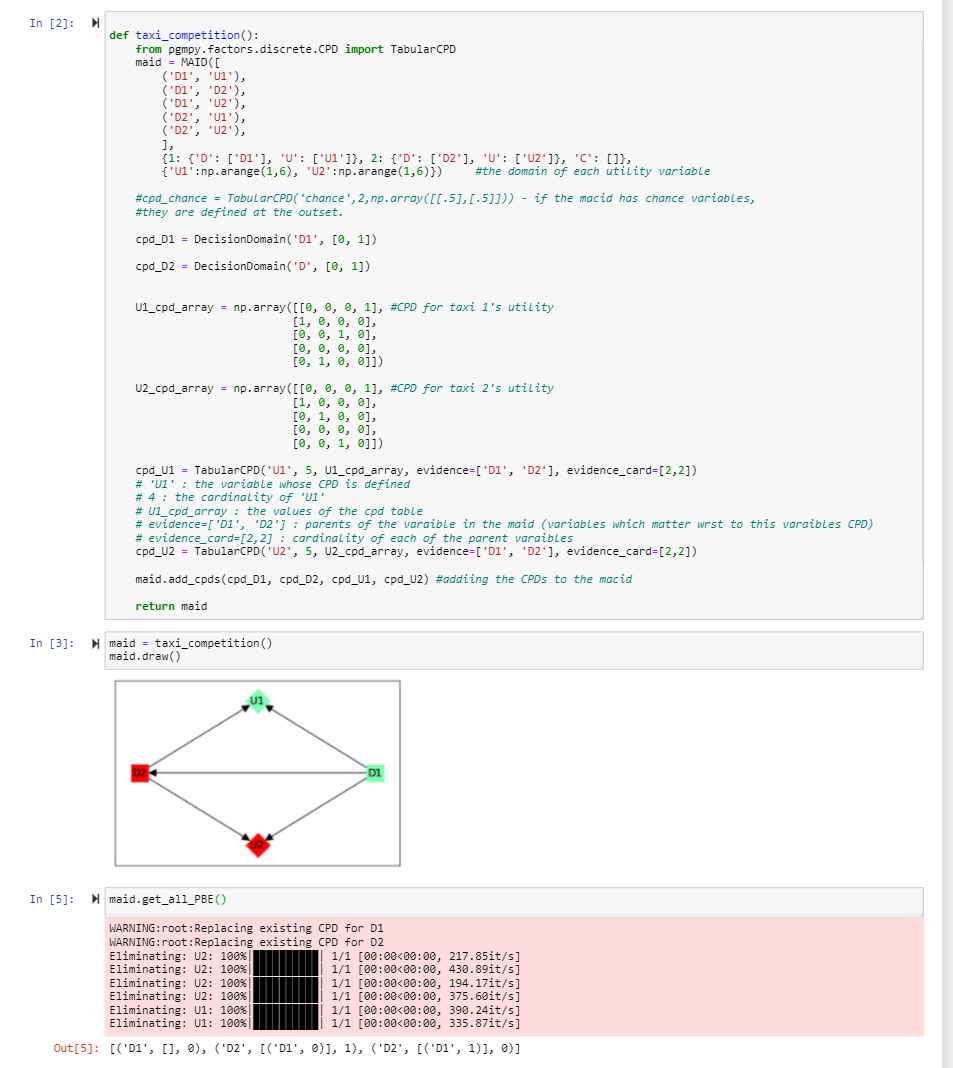}
    \captionof{figure}{Instantiating a MAIM for Example \ref{ex:taxis}. All SPEs are returned as nested lists. Each list contains triples to return the policy CPDs: the first argument gives the decision node, the second argument is a list giving the feasible decision context that is being conditioned on, the third gives the decision node's action. Here, action $0$ corresponds to selecting the expensive hotel and action $1$ is the cheap hotel.}
    \label{fig:taxi}
\end{minipage}
\vspace{0.5cm}

\subsection{Instantiating a MAID} \label{sec:insMACID}

Recall that a MAID consists of a graph ($\bm{V}, \bm{E}$) to represent the conditional dependencies between random variables and a set of players $\bm{N}$. A MAIM adds a parametrisation, $\theta \in \Theta$, to this MAID to specify with CPDs the precise relationship between variables in the MAID. Therefore, if we are only interested in MAIM's structure, we need only specify the graph's nodes (including their partition into chance, decision, and utility variables, and their allocation to respective players) and edges. Figure \ref{fig:jobhire} shows how we define, instantiate, and then plot the MAID for Example \ref{ex:hiring}. Decision and utility nodes are square- and diamond-shaped respectively. The nodes belonging to each agent are coloured uniquely and chance nodes are grey. Because r-reachability (Definition \ref{def:rreachable}) is a graphical criterion, we can also plot the MAID's relevance graph before specifying the MAID's parametrisation. 

To demonstrate a slightly more complex MAID, we use K\&M's `Road Example' \cite{koller2003multi}. As the relevance graph for this MAID contains cycles, it is helpful to have methods that visualise the relevance graph's distinct maximal SCCs and the MAID's condensed relevance graph (Figure \ref{fig:road}). This shows clearly that although the initial game is highly complex, the MAID formalism allows us to exploit the dependencies between the decision variables because there are 3 maximal SCCs. The MAID's condensed relevance graph shows that $(B3W, B3E)$ are the decision nodes in the smallest proper MAID subgame, and then $(B2W, B2E)$, before finally $(B1W, B1E)$ are only decision nodes in the largest subgame, which is identical to the original MAID. This is also the ordering that Algorithm \ref{algo:SPE} would follow once the MAID is parameterised as a MAIM. 

If we want the MAID to be parametrised as a MAIM (such as for interfacing with Gambit \cite{mckelvey2006gambit} or computing SPEs), we must also define:
\begin{itemize}
  \item Each decision node's domain.
  \item Domains and CPDs for each chance and utility node. Definition \ref{def:MAIM} states that these CPDs must reflect the fact that the value of a utility node is a deterministic function of the value of its parents.
\end{itemize}

\subsection{Finding Subgame Perfect Equilibria} \label{sec:NEcomp}

We will explain how Algorithm \ref{algo:SPE} works with the help of Example \ref{ex:taxis}. Figure \ref{fig:taxiMACID} shows the original MAID along with its parametrisation and proper MAID subgame. Figure \ref{fig:taxi} shows how we would parameterise a MAIM object for this example to use with our codebase. The MAID subgame ordering will be $\prec~ = \macid_1, \macid_2$ (shown in Figures \ref{fig:taxiMACID} a) and \ref{fig:taxiMACID} c) respectively) because $\macid_2$ is a proper MAID subgame of $\macid_1$. 

The only decision node in $\macid_2$ is $D^2$, and there are two two MAIMs for $\macid_2$ corresponding with the two possible instantiations of $D'1$: $d^1 \in \{e, c\}$. Player 2 will maximise its expected utility in $\macid_2$ by choosing $d^2 = c$ if $d^1 = e$ and $d^2 = e$ if $d^1=c$. In the next iteration of the optimisation loop, we consider the MAIM for $\macid_1$: $D^2$ has already been assigned a decision rule and so is now a chance node with a CPD corresponding to the policy just given. Therefore, $D^1$ is the only decision node remaining in $\macid_1$. Player 1's expected utility in this MAIM subgame is maximised by choosing $e$. Thus, the algorithm will output the only pure policy SPE of the game, shown in Figure \ref{fig:taxiNE} a). 

Our implementation uses a data tree to store results of each successive MAID subgame optimisation. The size of the tree depends on how many decision nodes there are and how many actions are available at each. The leaves of the tree are initially labelled with all the possible combination of actions selected at each of a MAIM’s decision nodes. At each iteration of Algorithm \ref{algo:SPE}, the data tree is reduced once. At each reduction, we find the first node that is empty according to a post-order traversal of the tree and assign it the label of its child which returns the largest expected utility for the agent making that choice. We continue reducing the tree until it is full, then read off the resulting pure policy SPE (Figure \ref{fig:taxi}). 

In this example, we did not have any `ties', however these can arise when comparing the expected utility an  agent  will  get  from  selecting  different  actions.  To account for this, we create a queue system. We pop the tree from the front of our queue, reduce it once and then push it to the back of the queue; each time there is a tie, we push multiple trees (one for each option) to the back of the queue. Our algorithm is complete when all trees in the queue are full. Using this technique, we find all pure policy SPEs. 

\subsection{Gambit} \label{sec:gambit}
An especially useful method belonging to our MACID class converts any MAIM into an EFG that can be used with Gambit \citep{mckelvey2006gambit}. Gambit is a powerful open-source collection of tools for reasoning about games. 
Our method \texttt{MAID\_to\_Gambit\_file} converts a MAID into an EFG using the procedure described in Appendix \ref{sec:macid2efg} before writing this EFG as a \texttt{.efg} file using a prefix-order traversal of the EFG. This file can then run with Gambit's command-line tools as well as with its GUI (Figure \ref{fig:gambit}) to compute different type of Nash Equilibria, to find beliefs at different nodes, or to find each agent's weakly or strongly dominated strategies. 

\end{document}